\documentclass[10pt,twocolumn]{article}

\usepackage[a4paper,margin=1.7cm,columnsep=0.62cm]{geometry}

\usepackage[T1]{fontenc}
\usepackage{newtxtext}

\usepackage{xcolor}
\usepackage{amsmath,amssymb,amsfonts}
\usepackage{graphicx}
\usepackage{booktabs}
\usepackage{makecell}
\usepackage[normalem]{ulem}
\usepackage{cancel}
\usepackage{diagbox}
\usepackage{amsthm}

\usepackage{stfloats}

\usepackage{pgfplots}
\pgfplotsset{compat=1.18}
\usepackage{tikz}
\usetikzlibrary{arrows.meta,positioning,calc,fit,shapes.geometric,backgrounds}

\usepackage[colorlinks=true,linkcolor=blue,citecolor=blue,urlcolor=blue]{hyperref}
\usepackage{cleveref}
\usepackage{microtype}

\newtheorem{theorem}{Theorem}

\theoremstyle{definition}
\newtheorem{remark}{Remark}

\newenvironment{keywords}
  {\par\smallskip\noindent\small\textbf{Keywords:}\ }
  {\par\medskip}

\title{Practical Algebraic Parameter Estimation\\for Noisy Data via Gaussian Process Regression}

\author{
    Oren Bassik\thanks{CUNY Graduate Center, Ph.D. Program in Mathematics, 365 Fifth Avenue,
    New York, NY 10016, USA (e-mail: obassik@gradcenter.cuny.edu).},
    Alexander Demin\thanks{Laboratoire d'informatique de l'{\'E}cole polytechnique, LIX, UMR 7161, CNRS, 1 rue Honor{\'e} d'Estienne d'Orves, 91120 Palaiseau, France (e-mail: demin@lix.polytechnique.fr).}, and
    Alexey Ovchinnikov\thanks{CUNY Queens College, Department of Mathematics,
    65-30 Kissena Blvd, Queens, NY 11367, USA and 
    CUNY Graduate Center, Ph.D. Programs in Mathematics and Computer Science, 365 Fifth Avenue,
    New York, NY 10016, USA (e-mail: aovchinnikov@qc.cuny.edu).}~
    \thanks{This work was partially supported by the NSF grants CCF-2212460 and DMS-1853650. This work was supported by an ERC-2023-ADG grant for the ODELIX project (number 101142171).}
}

\date{}

\begin{document}
\maketitle

\begin{figure*}[!b]
\footnotesize
\noindent
\begin{minipage}[c]{0.84\textwidth}
\itshape
Funded by the European Union. Views and opinions expressed are however
those of the author(s) only and do not necessarily reflect those of the
European Union or the European Research Council Executive Agency.
Neither the European Union nor the granting authority can be held
responsible for them.
\end{minipage}
\hfill
\begin{minipage}[c]{0.15\textwidth}
\centering
\includegraphics[height=0.86cm]{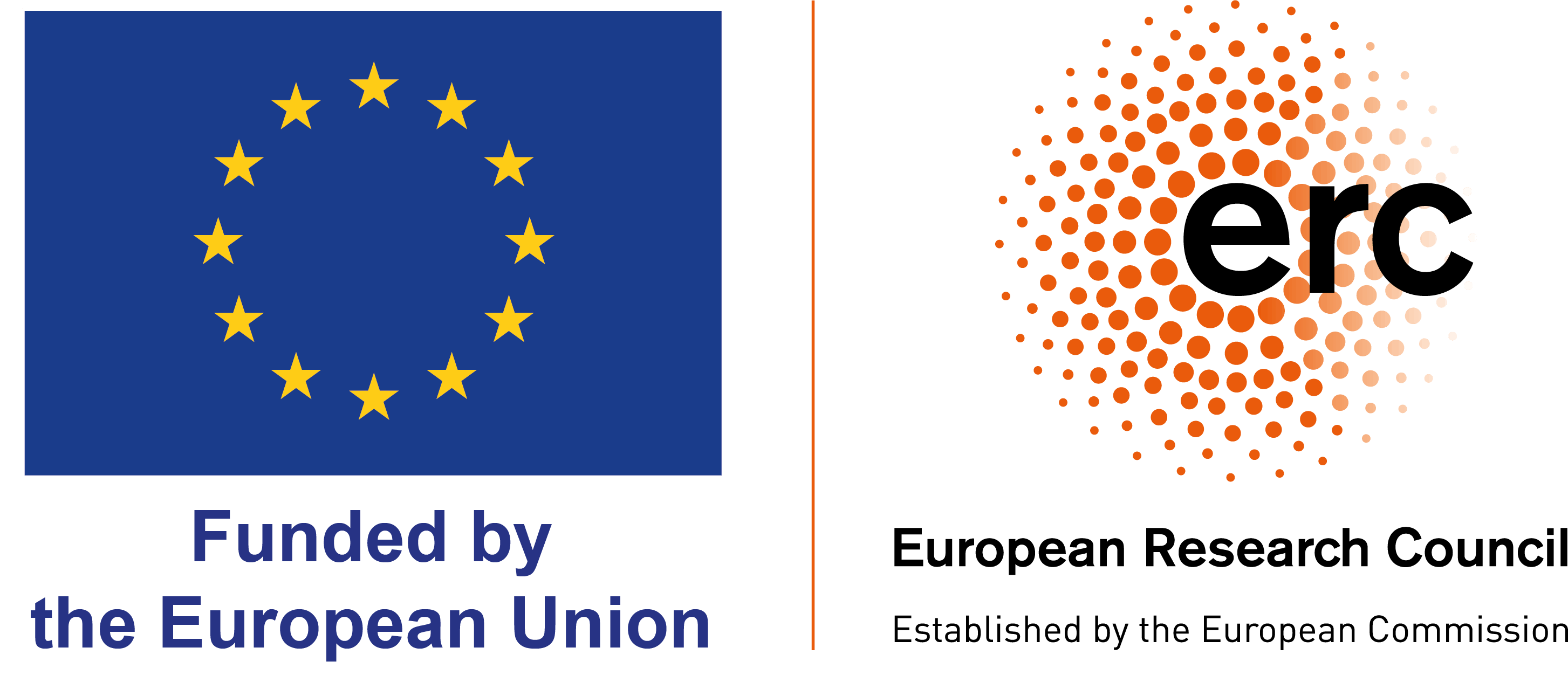}
\end{minipage}
\end{figure*}

\begin{abstract}
Parameter estimation for ordinary differential equation (ODE) models is a fundamental task that is often complicated by the limitations of conventional optimization-based methods. In theory, differential-algebraic approaches offer an appealing alternative: they reduce the problem to polynomial system solving and do not require user-supplied initial guesses for parameter values.
In practice, however, algebraic methods have been limited by their sensitivity to measurement noise, because they require accurate derivatives of observed outputs. In this work, we integrate Gaussian Process Regression (GPR) into the differential-algebraic method and derive a first-order error analysis in terms of noise level and algebraic sensitivity. We evaluate the method across several noise levels on a benchmark of 25 dynamical systems arising in applications including mechanical engineering and systems biology. The proposed method achieves the highest aggregate performance among the methods considered, recovering all sought parameter values and initial conditions to within 10\% relative error in 88.5\% of runs. These results demonstrate that robust derivative estimation can make differential-algebraic parameter estimation practical for dense, noisy synthetic data while retaining key advantages of the algebraic formulation.
% Parameter estimation for ordinary differential equation (ODE) models is a critical task often hindered by noisy data and the limitations of traditional optimization methods. The differential-algebraic approach offers theoretical advantages: it can generate candidate parameter sets from polynomial systems induced by output derivatives, without requiring user-supplied initial parameter guesses.
% In practice, however, algebraic methods have been limited by their sensitivity to measurement noise, because they require accurate derivatives of observed outputs. In this work, we integrate Gaussian Process Regression (GPR) into the derivative estimation stage of a differential-algebraic pipeline and derive a local bound expressing parameter and state error in terms of derivative estimation error and algebraic sensitivity. We evaluate the resulting method on a comprehensive benchmark of 1{,}250 datasets spanning 25 dynamical systems and five noise levels. When paired with a local refinement step, the proposed method achieves the highest aggregate benchmark performance among the compared methods, with 79.6\% success at a 1\% error tolerance. These results show that robust derivative estimation can make differential-algebraic parameter estimation practical for dense, noisy synthetic time-series data while preserving key advantages of the algebraic formulation.
\end{abstract}

\begin{keywords}
  Parametric ODEs,  
  Gaussian Process Regression,
  Parameter Estimation,
  Differential Algebra
\end{keywords}

\section{Introduction}
\label{sec:introduction}
Parameter estimation for nonlinear systems of ordinary differential equations (ODEs) is a fundamental challenge 
across
science and engineering. 
While ODE models are ubiquitous, their practical use requires precise values for unknown parameters that must be inferred from experimental data. 
The dominant paradigm for this task is nonlinear optimization, where parameter values are sought that minimize the discrepancy between observed measurements and a model's simulated output. 
% obtained by numerical integration of the ODE system. 
This problem is challenging due to its nonconvexity and potential ill-conditioning~\cite{Raue2015,Villaverde2014}.
Optimization methods are also susceptible to well-known practical limitations: they often require a search region known to contain the sought parameter values, together with carefully chosen initial guesses or multistart strategies, to avoid converging to non-optimal local minima.

\begin{figure}[!htpb]
  \centering
  \resizebox{\columnwidth}{!}{\input{figures/error_vs_noise_generated.pgf}}
  \caption{Success rate as measurement noise increases, aggregated over 25 benchmark examples. A run is counted as successful when every structurally identifiable parameter and initial condition is recovered with relative error below 10\%. The proposed method includes local refinement; the AAA-only method uses the same algebraic pipeline restricted to AAA rational interpolation, without refinement.}
  \label{fig:sr10-noise-all-arms}
\end{figure}

The differential-algebraic approach is an alternative method, {based on the theory from~\cite{Ljung1994,hong_sian_2019}}, which transforms the ODE system into a set of algebraic equations~\cite{bassik2023robustparameterestimationrational,Demin2025}. Then the problem can be expressed as a polynomial system and solved using methods from numerical algebraic geometry. In principle, this enables recovery of all parameter and initial condition values without requiring search regions, initial parameter guesses, or repeated numerical integration of the ODE system.
However, the approach requires calculating high-order derivatives of measured outputs, which is highly sensitive to measurement noise. Consequently, the practical usability of differential-algebraic methods is often limited by data quality.

A standard approach to dealing with derivative estimation from noisy observations is an intermediate smoothing or regularization step. Some classical approaches include smoothing splines and related least-squares formulations, which yield differentiable approximations of the measured signals and have been extensively used in the analysis of dynamical systems~\cite{Ramsay2007, Cao2009}. Kernel and local polynomial regression methods have also been proposed to estimate both states and their derivatives from noisy data~\cite{Liang2008}, forming the basis of two-stage estimation procedures in which a smoothed trajectory is first obtained and subsequently used for parameter estimation~\cite{Brunel2008}.

In this paper, we consider the problem of joint estimation of model parameters and unknown initial conditions. We integrate statistical smoothing methods into the differential-algebraic parameter estimation framework, and we strengthen the algebraic framework itself.
Specifically, we
use Gaussian Process Regression (GPR), a standard tool from statistical machine learning, as a differentiable smoother inside a differential-algebraic parameter estimation pipeline. While the foundational algebraic framework in~\cite{bassik2023robustparameterestimationrational} was benchmarked on noise-free data, in this paper we focus on dense, noisy measurements.
% the central contribution of this paper is demonstrating that integrating robust derivative estimation makes the method effective on dense, noisy measurements, a claim we validate with an extensive benchmark.

Gaussian processes have also been used more directly for ODE inference, including Bayesian GP-ODE and gradient-matching formulations~\cite{Girolami2008,Calderhead2009}, as well as recent GP/NeuralODE and variational multiple-shooting approaches~\cite{Bhouri2022,Hegde2022}. Our use of GPR is narrower: independent GP regressions serve as differentiable noise-aware smoothers inside a differential-algebraic pipeline, rather than as a full Bayesian inference model over states, dynamics, or parameters.

% The contributions of this paper are:
% \begin{itemize}
%     \item A practical differential-algebraic parameter estimation framework for noisy data by combining regression-based derivative estimation with several robustness-oriented extensions, including deterministic subsystem selection, aggregation across shooting configurations and derivative estimators, and optional bounded local polishing. 
%     \item A reproducible large-scale empirical evaluation on 25 ODE models spanning multiple noise levels, showing that these modifications make differential-algebraic parameter estimation competitive with established optimization-based approaches such as AMIGO2~\cite{AMIGO2} and SHADE on dense noisy measurements.
% \end{itemize}

The main contributions of this paper are:
\begin{enumerate}
    \item A differential-algebraic parameter estimation pipeline for dense noisy data that combines GPR-based derivative estimation with aggregation across derivative estimators and shooting configurations and optional local refinement.
    \item A deterministic construction of the square polynomial subsystem, replacing the arbitrary subsystem choice in~\cite{bassik2023robustparameterestimationrational}. The construction first minimizes the highest required output derivative order and then performs a bounded heuristic search guided by mixed volume, reducing reliance on difficult derivative estimates before reducing the cost of the polynomial solve.
    \item An error analysis for a fixed GPR fit and selected subsystem. It separates reconstruction bias from propagated measurement noise, gives a probabilistic local bound for the resulting parameter and state error, and quantifies kernel-dependent noise amplification for the SE and RQ kernels.
    \item A reproducible comparison with AMIGO2~\cite{AMIGO2} and SHADE on 25 systems at five noise levels, together with controlled comparisons of the full derivative estimator suite against AAA alone and of polished against unpolished estimates.
\end{enumerate}

This paper is organized as follows. In~\Cref{sec:background}, we recall some relevant background. In~\Cref{sec:methodology}, we describe our proposed method. In Section~\ref{sec:theory}, we assess how the accuracy of the proposed parameter estimation method depends on the error of derivative estimation and the choice of the polynomial subsystem. We illustrate the method on a controlled CSTR example in~\Cref{sec:examples}. In~\Cref{sec:experimental_setup,sec:results,sec:discussion}, we present the experimental setup, results, and discussion. We conclude in~\Cref{sec:conclusion}.

\section{Background}
\label{sec:background}

\subsection{Model parametrization}
\label{ssec:model_parametrization}
Consider a system given in the conventional state space form of the dynamics between the input and the output:
\begin{align}
&\mathbf{x}'(t) = \mathbf{f}(\mathbf{x}(t), \mathbf{u}(t), \mathbf{p}),\label{eq:x_state_dynamics}\\
&\mathbf{y}(t) = \mathbf{g}(\mathbf{x}(t), \mathbf{u}(t), \mathbf{p}),\label{eq:y_output}
\end{align}
where $\mathbf{x}(t)$ are the states or internal variables, $\mathbf{u}(t)$ are the input signals, $\mathbf{p}$ are the time-independent parameters, and $\mathbf{y}(t)$ are the observed outputs.

In our experimental setup, the continuous input functions $\mathbf{u}(t)$ are prescribed and known (i.e., they do not depend on the unobserved state). The unknown quantities are the initial conditions $\mathbf{x}(0)$ and the parameter values $\mathbf{p}$. It is also assumed that for every observed output $y_i$ we have access to the measured output data $(t_{i,1}, y_i(t_{i,1})), \ldots, (t_{i,N_i}, y_i(t_{i,N_i}))$. Then the task of parameter estimation is to reconstruct the unknown values of $\mathbf{x}(0)$ and $\mathbf{p}$ from this data.

\subsection{The Differential-Algebraic Approach to Parameter Estimation}
\label{ssec:diff_alg_approach}
In the differential-algebraic setting, we consider models of the form~\eqref{eq:x_state_dynamics} and~\eqref{eq:y_output} with rational dynamics, that is, with $\mathbf{f}$ and $\mathbf{g}$ being rational functions of states, inputs, and parameters.
A detailed exposition of the method can be found in~\cite[Section~3]{bassik2023robustparameterestimationrational}. Here, we provide an example.

To illustrate the method, we consider the toy ODE system (where we omit explicit dependence on time)
\[
\left.\begin{aligned}
x' = a x^2 + b,\quad~~ y = x,
\end{aligned}\right.
\]
where $a,b$ are unknown time-independent parameters. We also have access to the data $(t_i, y(t_i)), i=1,2,3,4$ observed in an experiment: 
\[
\{(0.00, 1.00), ~(0.33, 1.42), ~(0.67, 2.18), ~(1.00, 4.14)\}.
\]
We used the initial condition $x(0) = 1.00$ and the values $a = 0.60$ and $b = 0.40$ to simulate the ODE and obtain this data. The goal of parameter estimation is to recover these values.

By differentiating the output equation twice (in general, the required order of differentiation is made precise in~\cite{hong_sian_2019}), we construct the system:
\begin{equation*}
    \label{eq:poly-sys2}
    \begin{aligned}
        y &= x \\
        y' &= x'\\
        y'' &= x''\\
    \end{aligned}
    \quad\quad
    \begin{aligned}
        x' &= a x^2 + b\\
        x'' &= 2 a x x'\\
    \end{aligned}
\end{equation*}
The next step is to estimate the output $y$ and its time derivatives at a chosen time point from the data $(t_i, y(t_i))$. 
% using an appropriate derivative estimator; this is also a central concern of the present paper. 
For this example, suppose at $t = 0$ we obtain the estimates:
\[
\hat{y}(0) \approx 1.00, \quad \hat{y}'(0) \approx 1.00, \quad \hat{y}''(0) \approx 1.20.
\]

Substituting these values into the differentiated equations at $t = 0$ gives a polynomial system in the indeterminates $a, b, x_0, x'_0, x''_0$ (for brevity, we denote $x_0 = x(0)$):
\begin{equation*}
    \begin{aligned}
        1.00 &= x_0 \\
        1.00 &= x'_0\\
        1.20 &= x''_0\\
    \end{aligned}
    \quad\quad
    \begin{aligned}
        x'_0 &= a x_0^2 + b\\
        x''_0 &= 2 a x_0 x'_0\\
    \end{aligned}
\end{equation*}
Solving this system recovers the parameters together with the initial condition,
\[(a, b, x_0) = (0.60,\ 0.40,\ 1.00),\]
which match the values we used to generate the data. In general the polynomial system may admit several solutions, and additional criteria, such as known sign constraints or fit quality, are used to select among them.

\subsection{Gaussian Process Regression for Derivative Estimation}
\label{ssec:gpr_for_derivatives}

We provide a brief overview of Gaussian Process Regression, focusing on properties useful for our method; for a thorough treatment, see~\cite{Rasmussen2006}. A Gaussian Process (GP) is a collection of random variables, any finite number of which have a joint Gaussian distribution. A GP prior over a time-dependent latent function is fully specified by its mean function $m(t)$ and covariance function (or kernel) $k(t, t')$.
% which we write as $f(t) \sim \mathcal{GP}(m(t), k(t, t'))$.

In a regression context, we assume that our noisy observations $y(t_i)$ at inputs $t_i$ are generated from a latent function $f(t)$ corrupted by Gaussian noise: $y_i = f(t_i) + \epsilon_i$, where $\epsilon_i \sim \mathcal{N}(0, \sigma_\epsilon^2)$.

Given $n$ observations $Y = (y(t_1), \ldots, y(t_n))^\top$ at times $t_1, \ldots, t_n$, the posterior predictive distribution at a test point $t_*$ is Gaussian with mean and variance
\begin{align}
    \hat{f}(t_*) &= m(t_*) + \mathbf{k}_*^\top \bigl(\mathbf{K} + \sigma_{\mathrm{GP}}^2 \mathbf{I}\bigr)^{-1} \bigl(Y - \mathbf{m}\bigr), \label{eq:gp_mean}\\
    \operatorname{Var}[f(t_*)] &= k(t_*, t_*) - \mathbf{k}_*^\top \bigl(\mathbf{K} + \sigma_{\mathrm{GP}}^2 \mathbf{I}\bigr)^{-1} \mathbf{k}_*, \label{eq:gp_var}
\end{align}
where $\mathbf{K}$ is the $n \times n$ matrix with entries $K_{ij} = k(t_i, t_j)$, $\mathbf{k}_* = (k(t_1, t_*), \ldots, k(t_n, t_*))^\top$, and $\mathbf{m} = (m(t_1), \ldots, m(t_n))^\top$, and $\sigma_{\mathrm{GP}}^2$ is a fitted parameter of the GPR model; it need not equal the data-generating noise variance $\sigma_\epsilon^2$. The posterior mean~\eqref{eq:gp_mean} is a smooth, analytic function of $t_*$ that serves as an estimate of the latent noise-free signal.

The choice of kernel is crucial as it encodes our prior assumptions about the function's properties, such as smoothness. For this work, we use two infinitely differentiable kernels. The first is the common squared exponential (or RBF) kernel:
\begin{equation}\label{eq:ker-se}
    k_{\mathrm{SE}}(t, t') = \sigma_f^2 \exp\left(-\frac{(t - t')^2}{2\ell^2}\right),
\end{equation}
defined by the length scale $\ell$, which controls the smoothness or characteristic frequency of the function, and the signal variance $\sigma_f^2$. The second is the rational quadratic kernel:
\begin{equation}\label{eq:ker-rq}
    k_{\mathrm{RQ}}(t, t') = \sigma_f^2 \left(1 + \frac{(t - t')^2}{2\alpha_{\mathrm{RQ}}\ell^2}\right)^{-\alpha_{\mathrm{RQ}}},
\end{equation}
which is a smooth alternative to the squared-exponential kernel that can accommodate variation over more than one time scale. Given the data $(t_i, y(t_i))$, the GPR model parameters, including $\sigma_{\mathrm{GP}}^2$, are learned by maximizing the log marginal likelihood.

For this choice of kernels, the derivative $\hat{f}^{(h)}$ exists and admits a closed-form expression for every $h$~\cite[Ch.~9.4]{Rasmussen2006}. It provides an estimate of the derivative of the noise-free signal.

As with other smoothing methods, in practice, derivative estimates are often less reliable near the boundaries of the observation interval; this motivates the use of multiple shooting points and aggregation in~\Cref{step:multi-time-points}.

\section{Methodology: A GPR-Enhanced Algebraic Framework with Deterministic Subsystem Selection}
\label{sec:methodology}

\begin{figure*}[t]
\centering
% Method-overview schematic for Section III.
% Pure TikZ; no generated inset figures are required.
\definecolor{modelblue}{RGB}{32,76,119}
\definecolor{modelbg}{RGB}{232,240,248}
\definecolor{datagreen}{RGB}{43,111,82}
\definecolor{databg}{RGB}{232,244,237}
\definecolor{solvegold}{RGB}{139,104,31}
\definecolor{solvebg}{RGB}{250,243,225}
\definecolor{postgray}{RGB}{68,74,83}
\definecolor{postbg}{RGB}{239,241,244}
\definecolor{linegray}{RGB}{77,84,93}

\resizebox{\textwidth}{!}{%
\begin{tikzpicture}[
  x=1in,y=1in,
  >=Latex,
  panel/.style={
    draw=#1,
    fill=#1!7,
    rounded corners=3pt,
    very thick,
    align=left,
    font=\sffamily\scriptsize,
    text width=2.88in,
    minimum height=1.20in,
    inner sep=6pt
  },
  modelpanel/.style={panel=modelblue, fill=modelbg},
  datapanel/.style={panel=datagreen, fill=databg},
  box/.style={
    draw=#1,
    fill=#1!8,
    rounded corners=2.4pt,
    very thick,
    align=center,
    font=\sffamily\scriptsize,
    text width=1.28in,
    minimum height=0.56in,
    inner sep=4pt
  },
  solvebox/.style={box=solvegold, fill=solvebg},
  postbox/.style={box=postgray, fill=postbg},
  optionalbox/.style={
    draw=postgray,
    fill=white,
    rounded corners=2.4pt,
    very thick,
    dashed,
    align=center,
    font=\sffamily\scriptsize,
    text width=1.28in,
    minimum height=0.56in,
    inner sep=4pt
  },
  arrow/.style={->,very thick,linegray},
  softarrow/.style={->,thick,linegray,dashed},
  lab/.style={font=\sffamily\tiny,text=linegray,fill=white,inner sep=1.2pt}
]

\node[modelpanel] (model) at (1.58,3.13) {
  {\bfseries\color{modelblue} Model-side symbolic preprocessing}\\[2pt]
  \textbullet\ {\bf Input:} ODE model~\eqref{eq:x_state_dynamics}--\eqref{eq:y_output} and known inputs\\
  \textbullet\ Determine identifiable parameters and initial conditions\\
  \textbullet\ Calculate required differentiation orders $h_i$\\
  \textbullet\ Differentiate symbolically and clear denominators\\
  \textbullet\ Form polynomial equations in algebraic unknowns $\boldsymbol z$
};

\node[datapanel] (data) at (1.58,1.20) {
  {\bfseries\color{datagreen} Data-side numerical preprocessing}\\[2pt]
  \textbullet\ {\bf Input:} noisy measurements $\{(t_{ij},y_{ij})\}$\\
  \textbullet\ Fit smooth output models $\hat y_i(t)$ per output\\
  \textbullet\ Evaluate derivative data $\widehat{\boldsymbol d}$ and known input derivatives\\
  \textbullet\ Choose one- and two-point shooting configurations
};

\node[solvebox] (instantiate) at (4.28,2.62) {
  Substitute derivative\\and input data into\\the polynomial equations
};
\node[solvebox] (roots) at (4.28,1.70) {
  Select and solve\\the square system\\[-1pt]
  {\scriptsize $F(\boldsymbol z,\widehat{\boldsymbol d})=0$}
};

\node[postbox] (candidates) at (6.12,2.95) {
  Candidate\\parameter and\\state values
};
\node[postbox] (rank) at (6.12,2.05) {
  Recover initial states;\\filter, simulate, and cluster;\\select refinement starts
};
\node[optionalbox] (polish) at (6.12,1.15) {
  Optional bounded\\LM refinement
};
\node[postbox] (output) at (6.12,0.25) {
  Rank raw and refined\\candidates by trajectory error;\\return $(\widehat{\boldsymbol p},\widehat{\boldsymbol x}(t_{\min}))$
};

\draw[arrow] (model.east) -- node[lab,pos=0.66,above] {symbolic equations} ([yshift=0.13in]instantiate.west);
\draw[arrow] (data.east) -- node[lab,pos=0.52,below] {numerical values} ([yshift=-0.13in]instantiate.west);
\draw[softarrow] ($(model.south)+(0.05,0)$) -- node[lab,right] {orders $h_i$} ($(data.north)+(0.05,0)$);

\draw[arrow] (instantiate) -- (roots);
\draw[arrow] (roots.east) -- node[lab,pos=0.54,above] {many roots} (candidates.west);
\draw[arrow] (candidates) -- (rank);
\draw[arrow] (rank) -- (polish);
\draw[arrow] (polish) -- (output);

\end{tikzpicture}%
}
\caption{Overview of the proposed method.}
\label{fig:method-overview}
\end{figure*}

\Cref{fig:method-overview} summarizes the symbolic and numerical objects used by the estimator. We now describe each step of the general procedure.

\subsection{Step 1: Structural Identifiability Analysis}\label{method:step1}
Before attempting to estimate parameters, we first perform a structural identifiability analysis on the ODE model~\eqref{eq:x_state_dynamics}-\eqref{eq:y_output} using methods from differential algebra~\cite{Ljung1994,Raue2009}, as implemented in \texttt{SIAN.jl}~\cite{hong_sian_2019} and \texttt{StructuralIdentifiability.jl}~\cite{stident,demin2026simplegeneratorsrationalfunction}. Structural identifiability analysis determines, from the model structure alone, whether it is theoretically possible to uniquely determine the parameters from noise-free data. It is a necessary condition on the model structure rather than a guarantee of accurate recovery from finite, noisy, partially observed data. 

This analysis provides two pieces of information used by the subsequent steps: 
(1) the differentiation order $h$ required by the differential-algebraic method and (2) the set of parameters and initial conditions that are structurally unidentifiable. The unidentifiable quantities are excluded from our analysis.

\subsection{Step 2: Symbolic Differentiation}\label{method:step2}
Using the differentiation order determined in Step~\hyperref[method:step1]{1}, we differentiate the equations of the ODE system symbolically with respect to time. As illustrated in Section~\ref{ssec:diff_alg_approach}, this generates a set of polynomial equations in the indeterminates $\mathbf{p}, \mathbf{x}, \mathbf{u}$, $\mathbf{y}$ and their time derivatives, relating the system parameters to the state, input, and output variables. We reduce the size of the system by eliminating the unidentifiable quantities from it based on structural identifiability analysis.
When rational terms occur in the model, denominators are cleared at this stage before the equations are passed to the polynomial solver.

\subsection{Step 3: Signal and Derivative Estimation}\label{method:step3}
Let ${\bf y} = (y_1,\ldots,y_m)$ denote the observed noisy outputs of the model. Given noisy measurements $\{(t_{i,j}, y_{i}(t_{i,j}))\}_{j=1}^{n_i}$, the goal of this step is to reconstruct a smooth approximation $\hat{y}_i(t)$ of the latent noise-free signal for each $i=1,\ldots,m$.

The differential-algebraic framework requires estimates of output derivatives at selected time points. Direct differentiation of noisy measurements is highly unstable. In this work, we use Gaussian Process Regression (GPR) to obtain smooth approximations of the observed outputs and their derivatives.

For each $i=1,\ldots,m$, a Gaussian process model defined by~\eqref{eq:gp_mean}-\eqref{eq:gp_var} is fitted to the noisy measurements, yielding a posterior mean function $\hat{y}_i(t)$. The hyperparameters, including the observation-noise variance, are learned by maximizing the log marginal likelihood.
For these GPR fits, we consider both kernels~\eqref{eq:ker-se} and~\eqref{eq:ker-rq}. Rather than selecting a single kernel a priori, each is fitted independently and contributes its own candidate solutions downstream, arbitrated by the solution validation of Step~\hyperref[method:step7]{7}. %Exact GP regression costs $O(n_i^3)$ time per fit, which is modest at our sample size ($n_i=750$); fitted models are cached and shared across estimators. % arms that use the same kernel.

Because these kernels are infinitely differentiable, derivatives of any required order can be computed directly from the posterior mean function; in the implementation, they are evaluated by Taylor-mode automatic differentiation of $\hat{y}_i(t)$. For every $i=1,\ldots,m$, with $h$ being the required differentiation order computed in Step~\hyperref[method:step1]{1}, the resulting estimates
\[
\hat{y}_i(t),\ldots,\hat{y}_i^{(h)}(t)
\]
serve as input to the polynomial system construction in Step~\hyperref[method:step4]{4}.

The framework is modular in the choice of derivative estimator and can accommodate methods other than GPR, as also noted in~\cite{bassik2023robustparameterestimationrational}. In the implementation used in the benchmarks, we consider a suite of derivative estimators, including GPR-based smoothers, AAA rational interpolation, and Chebyshev approximation variants, with noise-based filtering of interpolation methods expected to fail at higher noise.
Candidate solutions from the admitted estimators are aggregated and ranked downstream.

% This step is the core of our contribution, integrating robust smoothing and derivative estimation into the original algebraic workflow. 
% The implementation is modular in the choice of estimator: any method that produces a smooth approximation of each output together with its derivatives can be used. 

% For noisy data, GPR-family estimators are central in this work because they are much more robust to noise than direct AAA rational interpolation (see \Cref{ssec:aaa_gpr}). In the benchmark implementation, the package evaluates a configured suite of derivative estimators, including GPR-based smoothers, AAA rational interpolation, and Chebyshev approximation variants, with noise-based filtering of interpolation methods known to fail at higher noise. Candidate solutions from the admitted estimators are aggregated and ranked downstream.

% With the kernels we have chosen, derivatives of any order can be obtained directly, which is the key property that makes GPR well suited to the differential-algebraic framework, where accurate high-order derivatives are the essential input.

\subsection{Step 4: Polynomial System Formulation}\label{method:step4}
This step combines the symbolic equations obtained in Step~\hyperref[method:step2]{2} with the derivative estimates produced in Step~\hyperref[method:step3]{3}.

A time point $t_0$ is selected. For each output component $y_i$, we evaluate the fitted output approximation $\hat{y}_i(t)$ and its derivatives up to order $h$ at $t_0$, obtaining
\[\hat{y}_i(t_0),\ldots,\hat{y}_i^{(h)}(t_0).\]
For each known input component $u_i$, similarly the values of the derivatives\[u_i(t_0), \ldots, u_i^{(h)}(t_0)\] are obtained. When $u_i(t)$ is specified as a differentiable function of time, these values are computed automatically; otherwise, they may be supplied directly.

These values are substituted in the polynomial equations generated in Step~\hyperref[method:step2]{2}. Specifically, each occurrence of $y_i^{(j)}$ and $u_i^{(j)}$ is replaced by the corresponding numerical value evaluated at $t_0$.
The result is a numerical multivariate polynomial system whose unknowns are the parameters ${\bf p}$ together with the values of the state variables ${\bf x}$ and their derivatives at $t_0$. 

\subsection{Step 5: Polynomial System Solving}\label{method:step5}

The polynomial system obtained in Step~\hyperref[method:step4]{4} is typically overdetermined. In the noise-free setting, the equations would be consistent. However, because the derivatives are estimated from noisy observations, the system is generally inconsistent. We therefore replace the full system by a carefully chosen square subsystem, 
\[
F(\boldsymbol z,\boldsymbol d)=0,
\]
where $\boldsymbol z$ collects the indeterminates (such as parameters, states, and their derivatives), $\boldsymbol d$ contains the exact output values and derivatives required by the subsystem, and $\widehat{\boldsymbol d}$ contains their estimates from Step~\hyperref[method:step3]{3}. Thus $F(\boldsymbol z,\boldsymbol d)=0$ is the exact-data system, while the numerical system replaces $\boldsymbol d$ by $\widehat{\boldsymbol d}$.
% This system subsequently will be solved using numerical algebraic geometry methods.

A straightforward approach is to form random linear combinations of the original equations. In practice, however, the equations contain derivatives of different orders and magnitudes. Combining such equations can amplify estimation errors. The method of~\cite{bassik2023robustparameterestimationrational} instead selects a zero-dimensional square subsystem using a Jacobian rank criterion. 

In this work, we select the square subsystem by a deterministic rule aimed at robustness. The rule first minimizes the order of the output derivatives the subsystem requires, since higher-order derivative estimates can be more sensitive to noise. At that order, it performs a bounded heuristic search over full-rank square subsystems and selects the candidate of smallest computed mixed volume, the generic number of solutions and hence the number of homotopy paths the solver must track. The selected system is solved by homotopy continuation~\cite{HomotopyContinuation.jl}, and its real solutions are retained as candidate values for the parameters ${\bf p}$ and the state variables ${\bf x}(t_0)$ at time $t_0$.

% This system is typically overdetermined, so we reduce it to a square system that can be solved readily with numerical algebraic geometry tools. 
% A straightforward approach is to select the equations of the square system as random linear combinations of the original equations, but this is brittle and exacerbates sensitivity to inexact derivative estimates.
% In~\cite{bassik2023robustparameterestimationrational}, the authors pick an arbitrary square subsystem of dimension zero by using a Jacobian rank test. In the method in this paper, we developed the following approach, which we have found to be more efficient.

\subsection{Step 6: Aggregation Across Time Points and Derivative Estimators}\label{step:multi-time-points}
In practice, estimation at a single time point $t_0$ can be sensitive to the local data quality. To improve robustness, we solve the estimation problem in several configurations and aggregate the resulting candidate parameter sets. We use two kinds of configurations: \emph{single-point} estimations, which use the derivative estimates at one time point as described above, and \emph{two-point} estimations, which jointly use the derivative estimates at a pair of time points, coupling the two corresponding systems through the shared parameters. For the benchmarks in this paper we use 20 single-point estimations at exponentially warped shooting points: for $r=1,\ldots,20$, we set $u_r=(r-1)/19$ and $s_r=(e^{3u_r}-1)/(e^3-1)$, then take the sampled time point nearest to $t_{\min}+s_r(t_{\max}-t_{\min})$. This clusters more points near the initial time. Although this can place some shooting points where derivative estimates are less reliable (\Cref{ssec:gpr_for_derivatives}), the aggregation across configurations and the validation in Step~\hyperref[method:step7]{7} are designed to absorb such per-configuration unreliability. We also use up to 15 two-point estimations per dataset, chosen from these shooting points by largest temporal separation.
In the benchmark implementation, these shooting configurations are evaluated with the suite of derivative estimators described in Step~3, subject to noise-based estimator filtering. Each estimator/configuration pair yields a collection of candidate parameter sets, and the process is naturally parallelizable since the configurations are independent. The resulting candidate sets are aggregated before the final filtering step.
When a candidate is obtained at a later shooting time, its estimated state is integrated backward to the first observation time to obtain the reported initial condition.

\subsection{Step 7: Solution Filtering and Validation}\label{method:step7}
The aggregated set of candidate solutions is first filtered to remove any solutions that are non-physical (e.g., negative parameter values if known to be positive) or, optionally, lie outside user-provided bounds. For the remaining candidates, we perform a full numerical simulation of the original ODE system and score the candidate by the trajectory sum of squared errors
\[
    \sum_{\ell}\sum_j
    \left(y_{\ell,\operatorname{model}}(t_j)-y_{\ell,\operatorname{data}}(t_j)\right)^2,
\]
where the outer sum is over measured outputs. Candidates are clustered and ranked by this score. In the polished variant, the polishing starts are chosen after this filtering and clustering, and the final raw-and-polished candidate pool is ranked by the same score.

\subsection{Step 8: Optional Polishing Step}
In addition to the raw algebraic solutions, we consider a polished variant in which the algebraic solution is used as an initial guess for a local least-squares refinement. For this step, we use bounded Levenberg--Marquardt least squares in per-variable log coordinates, with the same trajectory-error score used for candidate ranking.

\section{Error Analysis}
\label{sec:theory}

The goal of this section is to prove~\Cref{thm:fixed-design-local}, which analyzes the error in the GPR and algebraic stages of the proposed method.
We study the GPR and algebraic stages (Steps~\ref{method:step3}-\ref{method:step5}) for a single shooting point and a fixed selected subsystem. The theorem separates the error of derivative estimation from the sensitivity of the constructed polynomial system to that error. We use this analysis to interpret the empirical behavior reported in \Cref{sec:results}, and we return to that connection in \Cref{sec:discussion}.

Our bounds are probabilistic in the following model (see also~\Cref{rem:scope}): the measurement noise is random; the measurement times, the GPR mean and kernel hyperparameters, the shooting point, and the selected square subsystem are treated as fixed and do not depend on the measurement noise.

\subsection{Derivative Error for a Fixed GPR Fit}

Let $f_i$ denote the noise-free output functions, and let $Y \in \mathbb{R}^N$ collect the noisy measurements $y_{i,j}=f_i(t_{i,j})+\epsilon_{i,j}$ for $i=1,\ldots,m$ and $j=1,\ldots,n_i$.
Write
\begin{equation}
  Y=Y^\star+\epsilon,
  \qquad \epsilon\sim\mathcal N(0,\Sigma_\epsilon),
  \label{eq:fixed-design-observations}
\end{equation}
where $Y^\star$ contains the corresponding noise-free output values.
As in~\cite{Liang2008,Hasenauer2017}, we assume that the measurement errors are independent across observation times and outputs. Consequently, $\Sigma_\epsilon$ is block diagonal, with block $\sigma_{\epsilon,i}^2\boldsymbol I_{n_i}$ for the $i$-th output, where $\sigma_{\epsilon,i}$ is its measurement noise standard deviation.

Let $q$ be the number of output values and derivatives in $\boldsymbol d$, and let $\widehat{\boldsymbol d}(Y)\in\mathbb R^q$ stack their GPR estimates in the same order. We first record that, for a fixed fit, every component of $\widehat{\boldsymbol d}(Y)$ is an affine function of the data $Y$. Indeed, for one output with measurement block $Y_i$, let $\boldsymbol K$ and $\boldsymbol m_i$ denote its training covariance matrix and prior-mean vector, and let $\boldsymbol k_t=(k(t_{i,1},t),\ldots,k(t_{i,n_i},t))^{\mathsf T}$. For any $j\geqslant 0$, differentiating the posterior mean~\eqref{eq:gp_mean} $j$ times with respect to time gives
\[
  \widehat f_i^{(j)}(t)
  =m_i^{(j)}(t)+
  \left(\partial_t^j\boldsymbol k_t\right)^{\mathsf T}
  \left(\boldsymbol K+\sigma_{\mathrm{GP}}^2\boldsymbol I\right)^{-1}
  (Y_i-\boldsymbol m_i).
\]
Every quantity on the right-hand side except $Y_i$ is fixed with respect to our probability model, so this component is affine in $Y_i$, with coefficient row
\begin{equation}
  \boldsymbol w_{j,t}^{\mathsf T}
  =
  \left(\partial_t^j\boldsymbol k_t\right)^{\mathsf T}
  \left(\boldsymbol K+\sigma_{\mathrm{GP}}^2\boldsymbol I\right)^{-1}.
  \label{eq:fixed-design-weight-row}
\end{equation}
Stacking these coefficient rows in the same order as the components of $\widehat{\boldsymbol d}$ defines a matrix $W\in\mathbb R^{q\times N}$. We index its rows by $k=1,\ldots,q$, with each $k$ corresponding to one triple $(i,j,t)$; the $k$-th row acts on the measurement block $Y_i$ through $\boldsymbol w_{j,t}^{\mathsf T}$. Equivalently,
\[
  \widehat{\boldsymbol d}(Y)
  =\widehat{\boldsymbol d}(Y^\star)+W(Y-Y^\star).
\]

Even in the absence of measurement noise, the fixed fit does not reproduce the required output values and derivatives exactly. Applying the same fixed fit to the noise-free samples defines the \emph{reconstruction bias}
\[
  \boldsymbol b:=\widehat{\boldsymbol d}(Y^\star)-\boldsymbol d,
\]
whose component for derivative order $j$ of output $i$ at the point $t$ is
\begin{equation}
  b_{i,j,t}
  :=m_i^{(j)}(t)+
  \boldsymbol w_{j,t}^{\mathsf T}(Y_i^\star-\boldsymbol m_i)
  -f_i^{(j)}(t).
  \label{eq:fixed-design-bias-coordinate}
\end{equation}
This bias is generally nonzero: with $\sigma_{\mathrm{GP}}^2>0$, the posterior mean smooths the data rather than interpolating it, and even exact function values on a finite grid do not determine the derivatives of $f$. Substituting $Y=Y^\star+\epsilon$ into this affine identity yields the exact decomposition
\begin{equation}
  \widehat{\boldsymbol d}-\boldsymbol d
  =\boldsymbol b+W\epsilon,
  \qquad
  \Sigma_d:=W\Sigma_\epsilon W^{\mathsf T}.
  \label{eq:fixed-design-derivative-error}
\end{equation}
Since $\mathbb E[\epsilon]=0$, the reconstruction bias $\boldsymbol b=\mathbb E[\widehat{\boldsymbol d}-\boldsymbol d]$ is the mean error of the fixed estimator under~\eqref{eq:fixed-design-observations}. The matrix $\Sigma_d$ is the covariance induced in the derivative estimates by measurement noise; it is distinct from the conditional GP posterior covariance in~\eqref{eq:gp_var}. Moreover, since $\epsilon$ is Gaussian,
\[
  \widehat{\boldsymbol d}-\boldsymbol d
  \sim\mathcal N(\boldsymbol b,\Sigma_d).
\]

Related derivative error bounds for randomly sampled observation times are established by Liu and Li~\cite{liu-li-2023,liu-li-gp-2025}. For fixed hyperparameters, the kernel ridge estimator studied in their 2023 paper is algebraically identical to the GPR posterior mean~\eqref{eq:gp_mean}, while the companion GP paper gives related asymptotic results. Here the observation times are prescribed and the GPR fit is held fixed, so the measurement noise contributions are represented directly by $W\epsilon$ in~\eqref{eq:fixed-design-derivative-error}.

\subsection{Local Error in the Algebraic Solution}

On the algebraic side, recall from Step~\hyperref[method:step5]{5} the selected square system $F(\boldsymbol z,\boldsymbol d)=0$, viewed as a map
\[
  F:\mathbb R^{n_z}\times\mathbb R^q\longrightarrow\mathbb R^{n_z}
\]
whose first argument collects the algebraic unknowns and whose second argument carries the data. Let $P\boldsymbol z$ collect the model parameters and the state values at the shooting times, the quantities of primary interest; the remaining coordinates of $\boldsymbol z$ are state-derivative auxiliaries. 

Let $\boldsymbol z^\star$ be a root of the exact-data system, so that $F(\boldsymbol z^\star,\boldsymbol d)=0$. At $(\boldsymbol z^\star,\boldsymbol d)$, denote the Jacobians by $J_z = \frac{\partial F}{\partial \boldsymbol z}(\boldsymbol z^\star,\boldsymbol d)$ and $J_d = \frac{\partial F}{\partial \boldsymbol d}(\boldsymbol z^\star,\boldsymbol d)$.
Suppose that $J_z$ is nonsingular, and define the sensitivity matrix
\begin{equation}
  S=-P J_z^{-1}J_d.
  \label{eq:fixed-design-sensitivity}
\end{equation}
Thus $S$ is the first order map from errors in the estimated derivatives to errors in the reported model parameters and state values at the shooting times. Recall also that $F$ is polynomial and thus $F \in C^2$.

\begin{theorem}[Local error bound]
\label{thm:fixed-design-local}
There exist a neighborhood $V$ of $\boldsymbol z^\star$ and constants $\rho,C>0$, depending only on $F$, $(\boldsymbol z^\star,\boldsymbol d)$, and $P$.
For $0<\delta<1$, define
\begin{equation}
  r_\delta=\max_{1\le k\le q}\left\{|b_k|
  +\sqrt{2(\Sigma_d)_{kk}\log(2q/\delta)}\right\}.
 \label{eq:fixed-design-radius}
\end{equation}

For every $\delta\in(0,1)$ satisfying $r_\delta<\rho$, the following holds with probability at least $1-\delta$: the perturbed system \[F(\boldsymbol z,\widehat{\boldsymbol d})=0\] has a unique solution $\widehat{\boldsymbol z}$ in $V$, and
\begin{align}
  \|\widehat{\boldsymbol d}-\boldsymbol d\|_\infty
  &\le r_\delta,
  \label{eq:fixed-design-derivative-bound}\\
  \|P(\widehat{\boldsymbol z}-\boldsymbol z^\star)\|_\infty
  &\le \|S\|_\infty r_\delta+C r_\delta^2.
  \label{eq:fixed-design-algebraic-bound}
\end{align}
Here $\rho$ specifies the neighborhood of the exact data in which the algebraic solution is stable, while $r_\delta$ bounds the derivative-estimation error at the chosen confidence level.
\end{theorem}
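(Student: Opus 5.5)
The proof splits into a probabilistic part, which yields \eqref{eq:fixed-design-derivative-bound}, and a deterministic part, which uses the implicit function theorem to turn any data perturbation of size less than $\rho$ into \eqref{eq:fixed-design-algebraic-bound}. The two are then combined on the high-probability event.

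\textbf{Probabilistic part.} By \eqref{eq:fixed-design-derivative-error}, each coordinate $e_k:=(\widehat{\boldsymbol d}-\boldsymbol d)_k$ is distributed as $\mathcal N(b_k,(\Sigma_d)_{kk})$. The standard Gaussian tail bound $\Pr(|Z|>s)\le 2e^{-s^2/2}$ for $Z\sim\mathcal N(0,1)$ gives
\[
\Pr\bigl(|e_k-b_k|>\sqrt{2(\Sigma_d)_{kk}\log(2q/\delta)}\bigr)\le \delta/q .
\]
If $(\Sigma_d)_{kk}=0$, then $e_k=b_k$ almost surely and the bound is trivial. A union bound over $k=1,\ldots,q$ shows that, with probability at least $1-\delta$, every $|e_k|\le |b_k|+\sqrt{2(\Sigma_d)_{kk}\log(2q/\delta)}\le r_\delta$. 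This is \eqref{eq:fixed-design-derivative-bound}, and no independence between coordinates is needed.

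\textbf{Deterministic part.} Since $F$ is $C^2$ and $J_z$ is invertible, the implicit function theorem provides open sets $V\ni\boldsymbol z^\star$ and $U\ni\boldsymbol d$ and a $C^2$ map $\boldsymbol z(\cdot):U\to V$ with the following property: for $\boldsymbol d'\in U$, the point $\boldsymbol z(\boldsymbol d')$ is the unique solution of $F(\cdot,\boldsymbol d')=0$ in $V$. Choose $\rho>0$ so that the closed $\infty$-ball $\bar B_\rho(\boldsymbol d)$ lies in $U$. Differentiating $F(\boldsymbol z(\boldsymbol d'),\boldsymbol d')=0$ gives $D\boldsymbol z(\boldsymbol d)=-J_z^{-1}J_d$, so $D(P\boldsymbol z)(\boldsymbol d)=S$. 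Taylor's theorem with remainder, applied to each component of $P\boldsymbol z$ on the convex ball $\bar B_\rho(\boldsymbol d)$, gives
\[
\|P\boldsymbol z(\boldsymbol d')-P\boldsymbol z^\star-S(\boldsymbol d'-\boldsymbol d)\|_\infty\le C\|\boldsymbol d'-\boldsymbol d\|_\infty^2 ,
\]
where $C$ is one half of the maximum over the ball of the second derivatives of the components of $P\boldsymbol z$, with the $\infty$-norm conversion constants absorbed. These derivatives are continuous on a compact set, so $C$ is finite. The constants $V,\rho,C$ depend only on $F$, $(\boldsymbol z^\star,\boldsymbol d)$ and $P$.

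\textbf{Combination.} On the event from the probabilistic part, $\|\widehat{\boldsymbol d}-\boldsymbol d\|_\infty\le r_\delta<\rho$, so $\widehat{\boldsymbol d}\in U$. Hence $\widehat{\boldsymbol z}:=\boldsymbol z(\widehat{\boldsymbol d})$ is the unique root in $V$. The triangle inequality together with $\|S(\widehat{\boldsymbol d}-\boldsymbol d)\|_\infty\le\|S\|_\infty r_\delta$ then yields \eqref{eq:fixed-design-algebraic-bound}.

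\textbf{Main obstacle.} The delicate step is fixing $V$ and $\rho$ in advance and independently of $\delta$, so that uniqueness in $V$ holds for every $\widehat{\boldsymbol d}$ in the ball. The plan is to use the quantitative form of the implicit function theorem, obtained from a contraction mapping argument for the map $\boldsymbol z\mapsto \boldsymbol z-J_z^{-1}F(\boldsymbol z,\boldsymbol d')$. It is contractive on a small closed ball $V$, uniformly for $\boldsymbol d'$ in a small ball, by continuity of $\partial_z F$. Shrinking $\rho$ further ensures the map sends $V$ into itself. This fixes $V$ and $\rho$ first, and then $C$ is taken as the uniform second-derivative bound.
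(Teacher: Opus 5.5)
Your proposal is correct and follows the paper's proof essentially step for step. A coordinatewise Gaussian tail bound with a union bound gives \eqref{eq:fixed-design-derivative-bound}, and the implicit function theorem with a second-order Taylor remainder on the closed $\rho$-ball gives uniqueness in $V$ and \eqref{eq:fixed-design-algebraic-bound}. The differences are minor: you use the Chernoff bound $\Pr(|Z|>s)\le 2e^{-s^2/2}$ instead of the Mills-ratio bound, which removes the need to check $u>1$; you justify by compactness that $C$ holds on all of $\bar B_\rho(\boldsymbol d)$, slightly tightening the paper's ``sufficiently small ball''; and your contraction-mapping ``obstacle'' just reproves the implicit function theorem, which already fixes $V$ and $\rho$ independently of $\delta$.
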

\begin{proof}
Let $\boldsymbol\xi=\widehat{\boldsymbol d}-\boldsymbol d-\boldsymbol b=W\epsilon$. Then $\boldsymbol\xi\sim\mathcal N(0,\Sigma_d)$. For a coordinate with $(\Sigma_d)_{kk}>0$, set $g=\xi_k/\sqrt{(\Sigma_d)_{kk}}$ and $u=\sqrt{2\log(2q/\delta)}>1$. Since $g\sim\mathcal N(0,1)$, Proposition~2.1.2 of~\cite{Vershynin2026} gives $\Pr(g>u)\le e^{-u^2/2}/\sqrt{2\pi}$. By symmetry and the identity $u^2/2=\log(2q/\delta)$,
\[
\begin{aligned}
 &\Pr\!\left(
 |\xi_k|>\sqrt{2(\Sigma_d)_{kk}\log(2q/\delta)}
 \right)\\
 &\quad=2\Pr(g>u)
 \le \frac{2}{\sqrt{2\pi}}e^{-\log(2q/\delta)}\\
 &\quad=\frac{\delta}{q\sqrt{2\pi}}\le\frac{\delta}{q}.
\end{aligned}
\]
If $(\Sigma_d)_{kk}=0$, then $\xi_k=0$ almost surely and the same bound holds.
Let $A_k$ denote the event that $|\xi_k|$ exceeds the threshold in the preceding display. The union bound gives
\[
 \Pr\!\left(\bigcup_{k=1}^{q} A_k\right)
 \le \sum_{k=1}^{q}\Pr(A_k)
 \le q\cdot\frac{\delta}{q}=\delta.
\]
Consequently, with probability at least $1-\delta$, all $q$ coordinate bounds hold simultaneously.
\[
 \Pr\!\left(
 \begin{gathered}
 |\xi_k|\le\sqrt{2(\Sigma_d)_{kk}\log(2q/\delta)}\\
 \text{for every } k=1,\ldots,q
 \end{gathered}
 \right)\ge 1-\delta.
\]
Adding the deterministic bias gives
\[
 |\widehat d_k-d_k|
 \le |b_k|+\sqrt{2(\Sigma_d)_{kk}\log(2q/\delta)}
 \le r_\delta
\]
for every $k$. This proves~\eqref{eq:fixed-design-derivative-bound}.

For the parameter bound, the implicit function theorem gives neighborhoods $U$ of $\boldsymbol d$ and $V$ of $\boldsymbol z^\star$ and a $C^2$ map $\psi:U\to V$ such that, for every $\boldsymbol d'\in U$, $\psi(\boldsymbol d')$ is the unique solution of $F(\boldsymbol z,\boldsymbol d')=0$ in $V$. Choose $\rho>0$ so that the closed infinity-norm ball of radius $\rho$ about $\boldsymbol d$ lies in $U$. Since
\[
  D(P\psi)(\boldsymbol d)=-P J_z^{-1}J_d=S,
\]
Taylor expansion gives
\[
 P\{\psi(\boldsymbol d+\boldsymbol e)-\psi(\boldsymbol d)\}
 =S\boldsymbol e+R(\boldsymbol e),
 \qquad
 \|R(\boldsymbol e)\|_\infty\le C\|\boldsymbol e\|_\infty^2
\]
for all $\boldsymbol e$ in a sufficiently small ball. On the event in~\eqref{eq:fixed-design-derivative-bound} with $r_\delta<\rho$, taking $\boldsymbol e=\widehat{\boldsymbol d}-\boldsymbol d$ produces the root $\widehat{\boldsymbol z}=\psi(\widehat{\boldsymbol d})$ and the bound~\eqref{eq:fixed-design-algebraic-bound}.
\end{proof}

The corresponding first order term in the Taylor expansion of the local solution map has the joint Gaussian law
\begin{equation}
  S(\widehat{\boldsymbol d}-\boldsymbol d)
  \sim\mathcal N\!\left(
  S\boldsymbol b,\,
  S\Sigma_dS^{\mathsf T}\right).
  \label{eq:fixed-design-linear-law}
\end{equation}
Applying the fixed linear map $S$ to~\eqref{eq:fixed-design-derivative-error} gives this law directly.

The theorem separates derivative estimation error, represented by $(\boldsymbol b,\Sigma_d)$, from the sensitivity of the selected algebraic system, represented by $S$.

%In the synthetic benchmark, the first-order quantities in the theorem are computable: $\boldsymbol b$ can be evaluated from the known noise-free trajectory, $\Sigma_\epsilon$ is the injected noise covariance, and $S$ can be evaluated at the true root. For real data, the bias, the noise covariance, and the sensitivity at the true root are unknown, so we use the theorem as a sensitivity decomposition, not as a calibrated confidence interval.

\subsection{Kernel Noise Gain and Interpretation}

The decomposition also quantifies noise amplification component by component. If the measurement errors for one output are independent with variance $\sigma_\epsilon^2$, then for the component $k$ associated with derivative order $j$ at the point $t$,
\begin{equation}
  (\Sigma_d)_{kk}
  =\sigma_\epsilon^2\|\boldsymbol w_{j,t}\|_2^2,
  \label{eq:fixed-design-noise-gain}
\end{equation}
so $\|\boldsymbol w_{j,t}\|_2$ is the exact gain from the measurement-noise standard deviation to the standard deviation of that derivative estimate. It measures the noise contribution only; the total error also carries the reconstruction bias~\eqref{eq:fixed-design-bias-coordinate}. When $\sigma_{\mathrm{GP}}>0$, the gain admits an a priori bound that depends only on the kernel:
\begin{equation}
  \|\boldsymbol w_{j,t}\|_2\leq
  \frac{\kappa_j}{\sigma_{\mathrm{GP}}},
  \qquad
  \kappa_j^2=
  \left.
  \partial_t^j\partial_{t'}^j k(t,t')
  \right|_{t'=t}.
  \label{eq:fixed-design-kernel-envelope}
\end{equation}
To verify this bound, set $\boldsymbol c=\partial_t^j\boldsymbol k_t$ and $A=\boldsymbol K+\sigma_{\mathrm{GP}}^2\boldsymbol I$. The kernel matrix augmented by this derivative evaluation,
\[
 \begin{pmatrix}
  \kappa_j^2&\boldsymbol c^{\mathsf T}\\
  \boldsymbol c&A
 \end{pmatrix},
\]
is positive semidefinite, so its Schur complement gives $\boldsymbol c^{\mathsf T}A^{-1}\boldsymbol c\le\kappa_j^2$. Since $A\succeq\sigma_{\mathrm{GP}}^2\boldsymbol I$,
\[
 \|\boldsymbol w_{j,t}\|_2^2
 =\boldsymbol c^{\mathsf T}A^{-2}\boldsymbol c
 \le\sigma_{\mathrm{GP}}^{-2}
 \boldsymbol c^{\mathsf T}A^{-1}\boldsymbol c
 \le\frac{\kappa_j^2}{\sigma_{\mathrm{GP}}^2},
\]
which proves~\eqref{eq:fixed-design-kernel-envelope}. Direct differentiation of the kernels~\eqref{eq:ker-se}--\eqref{eq:ker-rq} gives
\begin{equation}
 \begin{aligned}
  \kappa_{j,\mathrm{SE}}
  &=\sigma_f\ell^{-j}
    \sqrt{\frac{(2j)!}{2^j j!}},\\
  \kappa_{j,\mathrm{RQ}}
  &=\sigma_f\ell^{-j}
    \sqrt{\frac{(2j)!(\alpha_{\mathrm{RQ}})_j}
    {j!(2\alpha_{\mathrm{RQ}})^j}},
 \end{aligned}
 \label{eq:fixed-design-kernel-constants}
\end{equation}
where $(\alpha)_j=\alpha(\alpha+1)\cdots(\alpha+j-1)$ is the rising factorial. For these stationary kernels, $\kappa_j$ does not depend on $t$, and the RQ constant reduces to the SE constant as $\alpha_{\mathrm{RQ}}\to\infty$. For fixed $\ell$ and $\sigma_f$, the constants $\kappa_j$ grow rapidly with the derivative order: for the SE kernel, asymptotically as $(2j/e)^{j/2}\ell^{-j}$, up to a constant factor, by Stirling's formula.

\begin{remark}[Practical considerations]
\label{rem:practical}
From the theorem, we make the following practical observations.
\begin{itemize}
    \item Because $\kappa_j$ grows rapidly with $j$, square subsystems that require higher-order output derivatives admit larger worst-case noise gains in~\eqref{eq:fixed-design-kernel-envelope}. This motivates the first criterion of the selection rule in Step~\hyperref[method:step5]{5}: among admissible square subsystems, minimize the highest required derivative order. The realized gain $\|\boldsymbol w_{j,t}\|_2$ also depends on the observation grid, the shooting points, and the fitted matrix $(\boldsymbol K+\sigma_{\mathrm{GP}}^2\boldsymbol I)^{-1}$, and the total error includes the reconstruction bias, so the criterion is a design heuristic rather than a guarantee that the error grows monotonically with the order.
    %\item The second criterion, minimizing the mixed volume, does not appear in the bound at all. It controls the number of homotopy paths and hence the cost of the polynomial solve, not its statistical accuracy.
    %\item Holding the other kernel settings fixed, a larger length scale $\ell$ decreases every positive-order $\kappa_j$ but can increase the reconstruction bias $\boldsymbol b$, the usual bias--variance trade-off. In our pipeline, $\ell$, $\sigma_f$, and $\sigma_{\mathrm{GP}}$ are learned by maximizing the marginal likelihood rather than chosen by hand.
    \item Different square subsystems yield different Jacobians $J_z$ and $J_d$, and hence different sensitivity matrices $S$; at first order, a subsystem with smaller $\|S\|_\infty$ propagates the same derivative error into a smaller parameter error. Our selection rule does not evaluate $S$, since $S$ depends on the unknown solution $\boldsymbol z^\star$. %Note also that $S$ is unchanged if the polynomial equations are multiplied by a nonsingular constant matrix, whereas a condition number of $J_z$ alone depends on such equation scaling; the norm of $S$ still depends on the units of the data and of the algebraic unknowns, so the propagated covariance $S\Sigma_dS^{\mathsf T}$ is more informative than comparing raw Jacobian condition numbers across systems.
\end{itemize}
\end{remark}

In the noise-free setting, interpolation can provide accurate derivatives when the signal is sufficiently smooth, consistent with the comparable performance of AAA-only and the full suite at $\eta=0$. With noisy data, AAA~\cite{NakatsukasaSeteTrefethen2018} interpolates its adaptively selected support values exactly---noise and all---and fits the remaining samples closely, so sample-scale fluctuations can enter the differentiated approximation. GPR instead permits a controlled departure from the observations; for one fixed fit, \eqref{eq:fixed-design-noise-gain} quantifies the resulting measurement-noise gain. The experiment in~\Cref{ssec:aaa_gpr} compares the full suite of derivative estimators with AAA alone, rather than isolating GPR.

\begin{remark}[Scope]
\label{rem:scope}
\Cref{thm:fixed-design-local} describes one set of GPR fits and one shooting configuration, all fixed in advance, and the continuation of the selected root. The pipeline of \Cref{sec:methodology} is adaptive around this core: it learns the GP hyperparameters from the same observations (Step~\hyperref[method:step3]{3}), aggregates candidates across estimators and shooting configurations (Step~\hyperref[step:multi-time-points]{6}), filters and ranks them (Step~\hyperref[method:step7]{7}), and optionally polishes the leading candidates (Step~8). The theorem does not model these adaptive steps, nor the numerical root finding of Step~\hyperref[method:step5]{5} itself, and it bounds the model parameters and state values at the shooting times, not a subsequent conversion of a shooting-time state into an initial condition at $t=0$. We therefore use it to explain the observed behavior, in particular the role of high-order derivatives and of algebraic sensitivity on the hard benchmark systems (\Cref{sec:results}), rather than to certify the complete estimator; calibrated uncertainty quantification for the full pipeline is discussed as future work in \Cref{sec:discussion}.
\end{remark}

\section{Controlled CSTR Example}\label{sec:examples}
We consider a non-isothermal continuous stirred-tank reactor with a known sinusoidal coolant input, following a standard CSTR model~\cite{Seborg} and related examples~\cite{Zhao2021,Kumar2024}. This example is useful because only one output is measured, while the algebraic system must recover both parameters and unobserved state information. The state variables are the scaled reactant concentration $C(t)$, scaled reactor temperature $T(t)$, and an auxiliary effective reaction-rate state $r_{\operatorname{eff}}(t)$. The known input is $u(t)=\sin(0.5t)$, and the measured output is
\[
y_1(t)=700\,T(t).
\]
With coefficients rounded for display, the dynamics are
\begin{align*}
C'(t) &= \frac{1-C(t)}{2\tau}
        - 2\,r_{\operatorname{eff}}(t)C(t),\\
T'(t) &= F_T(C,T,r_{\operatorname{eff}},t),\\
r'_{\operatorname{eff}}(t) &=
        \frac{12.5\,r_{\operatorname{eff}}(t)}{T(t)^2}
        F_T(C,T,r_{\operatorname{eff}},t),
\end{align*}
where
\begin{align*}
F_T(C,T,r_{\operatorname{eff}},t)
&= \frac{T_{\operatorname{in}}-T(t)}{2\tau}
 + 0.02857\,H\,
   r_{\operatorname{eff}}(t)C(t)\\
&\quad -2\,K\,T(t)
 + 0.8571\,K\\
&\quad + 0.05714\,K\sin(0.5t).
\end{align*}
Here $\tau$ is the residence-time parameter, $T_{\operatorname{in}}$ is the scaled inlet temperature, $H=\Delta H/(\rho C_p)$ is the scaled heat-release coefficient, and $K=UA/(V\rho C_p)$ is the scaled heat-transfer coefficient. The auxiliary state $r_{\operatorname{eff}}$ represents the temperature-dependent Arrhenius factor; its differential equation is obtained by differentiating that factor with respect to temperature, which is why the same heat-balance term $F_T$ appears in both $T'$ and $r'_{\operatorname{eff}}$.

The unknown estimated quantities are the four parameters
\[
(\tau,T_{\operatorname{in}},H,K)
\]
and the three initial conditions $(C(0),T(0),r_{\operatorname{eff}}(0))$. Measurements are generated at $750$ uniformly spaced points on $[0,10]$. At a shooting time $t_0$, only the derivatives of the measured output are reconstructed from data: the selected single-point system uses $y_1^{(j)}(t_0)$ for $j=0,\ldots,6$. Since the input $u(t)=\sin(0.5t)$ is known, its derivatives $u^{(j)}(t_0)$ for $j=0,\ldots,6$ are evaluated analytically. These fourteen data quantities enter as coefficients in the selected square polynomial system. The system has $26$ equations in $26$ unknowns, $242$ monomial terms in total ($207$ of them unique), maximum total degree $5$, and mixed volume $134$.

The CSTR example also illustrates a practical failure mode of the estimator. Although the estimated quantities are structurally identifiable, measuring only the temperature makes the problem numerically and statistically difficult. The concentration $C$ and effective reaction rate $r_{\operatorname{eff}}$ are latent, and the heat-balance equation depends on them only through the compound term $H\,C\,r_{\operatorname{eff}}$. Thus a good fit to the temperature trajectory need not imply equally good recovery of each latent factor. %The estimates in \Cref{tab:cstr_failure_mode} are diagnostic of the selected trajectory-fitting solution and of practical identifiability, not constrained physical estimates.

\begin{table}[!t]
\centering
\caption{CSTR latent-factor estimates in one representative benchmark trial. %The compound products $H C_0 r_0$ are computed from the unrounded estimates and may differ from the product of the rounded factors shown. 
The final column is the RMSE between the simulated output from the selected estimate and the noiseless trajectory $y_1^\star(t)$.}
\label{tab:cstr_failure_mode}
\scriptsize
\setlength{\tabcolsep}{3.0pt}
\begin{tabular}{@{}lrrrrr@{}}
\toprule
Noise $\eta$ & $H$ & $C_0$ & $r_0$ & $H C_0 r_0$ & RMSE$(\hat y_1,y_1^\star)$ \\
\midrule
Truth & $0.136$ & $0.647$ & $0.620$ & $0.0546$ & -- \\
$0$ & $0.136$ & $0.647$ & $0.620$ & $0.0546$ & $1.5{\times}10^{-11}$ \\
$10^{-8}$ & $0.136$ & $0.647$ & $0.620$ & $0.0546$ & $2.0{\times}10^{-7}$ \\
$10^{-6}$ & $0.136$ & $0.647$ & $0.620$ & $0.0546$ & $1.8{\times}10^{-5}$ \\
$10^{-4}$ & $0.129$ & $0.636$ & $0.653$ & $0.0534$ & $3.3{\times}10^{-3}$ \\
$10^{-2}$ & $-3.4{\times}10^{-6}$ & $800$ & $-0.267$ & $7.2{\times}10^{-4}$ & $2.4{\times}10^{-1}$ \\
\bottomrule
\end{tabular}
\end{table}

The transition in~\Cref{tab:cstr_failure_mode} is typical of the practical identifiability issue. Through $\eta=10^{-4}$, the output trajectory remains close to the truth and the compound quantity $H C(0)r_{\operatorname{eff}}(0)$ is still close to its true value. At $\eta=10^{-2}$, the fitted-trajectory residual remains small relative to the raw injected noise in this trial, but the individual latent factors are no longer meaningful: the heat-release coefficient is driven close to zero, the concentration initial condition becomes very large, and $r_{\operatorname{eff}}(0)$ changes sign.
This is not a structural-identifiability contradiction; it is a practical-identifiability failure under partial observation and noisy derivative estimation.

The algebraic subsystem is also numerically sensitive. For this CSTR system, the selected $26\times26$ single-point subsystem uses derivatives through order $6$. Its accepted shooting-point instantiations pass numerical rank validation, but the singular-value ratio for the selected polynomial-equation Jacobian is about $9{\times}10^9$ at the validation probes. %This is not a condition number of the final optimizer output; 
This large ratio indicates local numerical sensitivity of the selected subsystem, consistent with the derivative error propagation mechanism in~\Cref{sec:theory}.

\section{Experimental Setup}
\label{sec:experimental_setup}
To validate the performance and robustness of the proposed algebraic method, we conducted a comprehensive benchmark against established parameter estimation methods. This section details the benchmark systems, baseline methods, data generation protocol, and evaluation metrics used in our study.

The accompanying repository includes the benchmark configuration, generated data, reproduction scripts, and combined results CSV used for the paper.

\subsection{Benchmark Systems}
We selected a diverse suite of 25 ODE models from various scientific domains to cover a broad range of systems. The systems were chosen to vary in the number of parameters, number of states, and dynamic behaviors (e.g., oscillations, stiff dynamics). The benchmark suite includes well-known models from ecology (Lotka-Volterra), epidemiology (SEIR), neuroscience (FitzHugh-Nagumo), and immunology (Crauste). Their normalized equations and measured output combinations are summarized in the~\hyperref[sec:benchmark_models_appendix]{Appendix}.

\subsection{Compared Methods}
We compare the proposed method against two established optimization-based estimators and include all three methods here to make the comparison set explicit:
\begin{enumerate}
    \item \textbf{Global Optimization (\texttt{AMIGO2}):} We use AMIGO2~\cite{AMIGO2}, a widely-used toolbox in systems biology that implements a multi-start global optimization approach. This method performs multiple local optimizations from different starting points to better explore the parameter space and avoid local minima.
    \item \textbf{Differential-Evolution Optimization (\texttt{SHADE}):} A success-history based adaptive differential-evolution optimizer~\cite{Tanabe2013}, followed by a local least-squares refinement of the best candidate. Like AMIGO2, it searches for the parameters that minimize the discrepancy between the simulated and observed trajectories.
    \item \textbf{Proposed Method:} The differential-algebraic workflow described in~\Cref{sec:methodology}. Unless otherwise stated, ``proposed method'' denotes the polished variant: algebraic candidates are generated from the configured derivative-estimator suite, ranked by trajectory loss, and then refined by the bounded Levenberg--Marquardt polish step described below.
\end{enumerate}
For AMIGO2 and SHADE, the estimated parameters and initial conditions were searched over the explicit positive box $[10^{-5},10]$; the true values were sampled from $[0.1,0.9]$. The proposed algebraic stage is not a box-constrained search and does not require initial parameter guesses or multi-start strategies.%; candidate filtering and optional polishing used the implementation settings recorded in the archived benchmark artifact.

%We do not include a separate baseline that polishes from random starts without the algebraic stage: AMIGO2 and SHADE already combine global search with local refinement, and thus serve as stronger and more realistic representatives of the optimize-from-random-starts paradigm, against which the proposed method is compared directly.

Beyond these external baselines, two further experiments examine the proposed method itself. The first (\Cref{ssec:aaa_gpr}) compares the proposed no-polish derivative-estimator suite with a variant restricted to AAA rational interpolation only; the latter breaks down rapidly as noise increases. The second (\Cref{ssec:polishing}) isolates the optional local refinement, or ``polishing'', step, whose benefit is substantial and, notably, grows with the noise level.

\subsection{Implementation Details}
For reproducibility, we provide the key implementation details for each method:

\textbf{Proposed method:} The benchmark implementation uses the package's configured derivative-estimator suite, including GPR-based smoothers implemented with \texttt{GaussianProcesses.jl} and \texttt{AbstractGPs.jl}/\texttt{KernelFunctions.jl}, AAA rational interpolation, and Chebyshev approximation variants, with noise-based filtering of interpolation methods known to fail at higher noise. For GPR-based fits, an independent GP is fitted to each observed output component and derivatives are computed via automatic differentiation of the GP posterior mean. The polynomial system solver uses homotopy continuation (HomotopyContinuation.jl). For the polished variant, solutions are refined using bounded Levenberg--Marquardt least squares in per-variable log coordinates, minimizing the same trajectory sum-of-squared-error score used for candidate ranking.

\textbf{AMIGO2:} Global optimization uses the enhanced Scatter Search (eSS) algorithm with a maximum of 200,000 function evaluations and 600-second time limit. Local refinement is performed using the nl2sol algorithm with up to 100,000 iterations and tolerances of $10^{-13}$. ODE integration uses CVODES with absolute and relative tolerances of $10^{-12}$.

\textbf{SHADE:} A success-history based adaptive differential-evolution search is run over the same bounded box with a 600-second time limit and up to 200,000 function evaluations. Its objective is the same trajectory-error loss closure used by the proposed method's polish context; after the global phase, the top five SHADE seeds are passed to the same bounded Levenberg--Marquardt least-squares polish routine used by the proposed polished variant. ODE integration uses the same high-precision settings as the proposed method.

\subsection{Data Generation and Noise Protocol}
For each benchmark system, we generated synthetic data to create a controlled experimental environment where the ground-truth parameters are known.
\begin{enumerate}
    \item \textbf{Parameter Sampling:} For each of 10 experimental trials, true parameter values and initial conditions were sampled uniformly from the interval $[0.1, 0.9]$.
    \item \textbf{Data Simulation:} The ODE system was solved using a high-precision numerical integrator (Vern9 with tolerances $10^{-14}$) to generate a baseline, noise-free trajectory, sampled at 750 equally spaced time points for each system. This dense-sampling setting is appropriate for evaluating derivative-based estimation from time-series data; substantially sparser data are outside the scope of the present benchmark and are discussed as a limitation.
    \item \textbf{Observables:} For each system, only a subset of state variables were treated as observable, reflecting realistic experimental scenarios where not all system states can be directly measured. The measured state variables and output combinations are summarized in the~\hyperref[sec:benchmark_models_appendix]{Appendix}.
    \item \textbf{Noise Injection:} To construct a controlled noisy-data benchmark, we added Gaussian white noise to the noise-free trajectory. We considered one noise-free condition and four nonzero noise levels. For each observed component, independent zero-mean Gaussian noise with standard deviation $\sigma_j=\eta\,|\bar y_j|$ was added at each time point, where $\bar y_j$ is the sample mean of that component's noise-free trajectory and $\eta \in \{0, 10^{-8}, 10^{-6}, 10^{-4}, 10^{-2}\}$. The same value of $\eta$ was applied to all outputs of a given system. Because this scale uses the trajectory mean, $\eta$ is not a uniform signal-to-noise ratio across systems, particularly for outputs with mean near zero.
\end{enumerate}
This protocol results in 25 systems $\times$ 5 noise conditions $\times$ 10 trials = 1{,}250 total datasets per method, allowing us to assess the robustness and average performance of each estimation method.

\subsection{Evaluation Metrics}
To quantify the performance of each method, we use two primary metrics. For these benchmark metrics, the estimated quantities are the structurally identifiable entries among the unknown model parameters and unknown initial conditions. Structurally unidentifiable parameters and initial conditions are excluded a priori from all error calculations.
\begin{enumerate}
    \item \textbf{Success Rate:} A parameter estimation run is considered a ``success'' based on multiple relative error thresholds. We report the percentage of trials where the relative error for every estimated quantity is less than 1\% (SR-1), 10\% (SR-10), and 50\% (SR-50). This multi-threshold approach captures both the precision and the overall reliability of a method.
    \item \textbf{Run-wise worst error:} For each run, we compute the maximum relative error over all estimated quantities. For an estimated quantity $q$, the relative error is defined as $|q_{\text{true}} - q_{\text{estimated}}| / |q_{\text{true}}|$. For runs that failed to produce any result, we assign a penalty error of $10^6$ to the maximum error for that run. We then summarize these per-run worst errors by reporting their median and 90th percentile (P90) across all runs. We use the median because it is robust to occasional outliers. This run-level aggregation ensures that a method is only deemed successful when \emph{all} estimated quantities in a run meet the quality threshold.
\end{enumerate}

Although trajectory error is used internally to rank candidate solutions and as the objective for the optional polishing step, it is not used as the primary benchmark metric. In partially observed ODE systems, large relative errors in the estimated quantities can still produce low trajectory RMSE over the sampled time window, as the CSTR example in \Cref{tab:cstr_failure_mode} illustrates. We therefore evaluate success by direct accuracy of the estimated quantities, using trajectory fit only as an optimization and candidate-selection criterion.

\section{Results}
\label{sec:results}

\subsection{Overall Performance}

\Cref{fig:sr10-noise-all-arms} and \Cref{tab:success_10_noise} summarizes the success rate across all systems as the noise level increases.

Aggregated over all 25 systems and noise levels, the proposed (polished) method attains the highest success rate at every tolerance: SR-1 $79.6\%$, SR-10 $88.5\%$, and SR-50 $91.6\%$, ahead of AMIGO2 ($75.1\%$, $85.0\%$, $88.5\%$) and SHADE ($70.2\%$, $79.1\%$, $83.1\%$); see \Cref{tab:overall_performance}. It is also the most precise, with the lowest median run-wise worst error ($0.0006\%$) and the tightest tail (P90 of $19\%$, versus $91\%$ for AMIGO2 and $378\%$ for SHADE).

\begin{table}[ht]
\centering
\caption{Overall performance with run-level aggregation. SR-X (Success@X\%) is the fraction of runs where all estimated quantities have relative error $<$X\%. Median/P90 Max Error: for each run, the maximum relative error over estimated quantities is computed; the median and 90th percentile across all runs are reported. Failed runs are assigned $10^6$ penalty.}
\label{tab:overall_performance}
\scriptsize
\setlength{\tabcolsep}{2.2pt}
\begin{tabular}{lccccc}
\toprule
Method & SR-1 & SR-10 & SR-50 & \makecell{Median\\Max (\%)} & \makecell{P90\\Max (\%)} \\
\midrule
Proposed (polished) & 79.6 & 88.5 & 91.6 & 0.00058 & 19.0 \\
AMIGO2 & 75.1 & 85.0 & 88.5 & 0.0012 & 91.2 \\
SHADE & 70.2 & 79.1 & 83.1 & 0.0024 & 378.4 \\
\bottomrule
\end{tabular}
\end{table}

The proposed method retains its lead across the \emph{full tested} noise range (\Cref{fig:sr10-noise-all-arms}): SR-10 declines from $99.6\%$ in the noise-free case to $65.2\%$ at the highest noise ($\eta=10^{-2}$), where it still leads AMIGO2 ($58.8\%$) and SHADE ($56.4\%$). All methods drop markedly between $\eta=10^{-4}$ and $\eta=10^{-2}$. Median run-wise worst error by noise level appears in \Cref{tab:noise_performance}; the corresponding SR-10 rates are reported in \Cref{tab:success_10_noise}.

\begin{table}[ht]
\centering
\caption{Median run-wise worst error (\%) by noise level. For each run, the maximum relative error over estimated quantities is computed; the median across runs at each noise level is reported. Failed runs are assigned $10^6$ penalty.}
\label{tab:noise_performance}
\scriptsize
\setlength{\tabcolsep}{2.4pt}
\begin{tabular}{lccccc}
\toprule
Method & 0 & $10^{-8}$ & $10^{-6}$ & $10^{-4}$ & $10^{-2}$ \\
\midrule
Proposed (polished) & $1.30\times 10^{-10}$ & $3.17\times 10^{-6}$ & $3.15\times 10^{-4}$ & 0.03 & 3.06 \\
AMIGO2 & $1.49\times 10^{-7}$ & $4.98\times 10^{-6}$ & $5.08\times 10^{-4}$ & 0.04 & 5.36 \\
SHADE & $5.46\times 10^{-11}$ & $5.36\times 10^{-6}$ & $5.78\times 10^{-4}$ & 0.05 & 5.97 \\
\bottomrule
\end{tabular}
\end{table}

\begin{table}[!t]
\centering
\caption{Success@10\% by noise level: percentage of runs whose maximum relative error over estimated quantities is below 10\%.}
\label{tab:success_10_noise}
\small
\begin{tabular}{lccccc}
\toprule
Method & 0 & $10^{-8}$ & $10^{-6}$ & $10^{-4}$ & $10^{-2}$ \\
\midrule
Proposed (polished) & 99.6 & 98.0 & 93.2 & 86.4 & 65.2 \\
AMIGO2 & 94.8 & 94.8 & 91.2 & 85.2 & 58.8 \\
SHADE & 87.6 & 85.6 & 84.4 & 81.6 & 56.4 \\
\bottomrule
\end{tabular}
\end{table}

\subsection{Performance across systems}

Difficulty varies widely across the 25-system suite. \Cref{tab:high_noise_systems} reports the per-system median run-wise worst error at the highest noise level ($\eta=10^{-2}$), where the differences among methods are most visible. Simple systems (the harmonic oscillator, Van der Pol, and mass-spring-damper) remain accurate for all three headline methods, whereas CSTR, HIV, Crauste, Flexible Arm, and FitzHugh-Nagumo are substantially harder. On these hard systems the proposed (polished) method is often, but not uniformly, the most accurate: it improves over AMIGO2 on Crauste, HIV, and Flexible Arm, while AMIGO2 or SHADE are better on Biohydrogenation, Slow-Fast, and FitzHugh-Nagumo, and all three methods exceed the reporting threshold on CSTR. The optional polishing step is decisive here, lifting several otherwise-unsolved systems from failure to recovery; we examine it in \Cref{ssec:polishing}. Because each run enters through its largest relative error over estimated quantities, these rows should be read as a stringent system-level summary rather than an average over easy and hard quantities.

\begin{table}[!t]
\centering
\caption{Median run-wise worst error (\%) at high noise ($10^{-2}$) by system. For each run, the maximum relative error over estimated quantities is computed; the median across runs is reported.}
\label{tab:high_noise_systems}
\scriptsize
\setlength{\tabcolsep}{2.2pt}
\begin{tabular}{lccc}
\toprule
System & Proposed (polished) & AMIGO2 & SHADE \\
\midrule
Aircraft Pitch & 4.57 & 4.57 & 4.57 \\
Bicycle Model & 0.25 & 0.45 & 0.45 \\
Biohydrogenation & 15.7 & 8.45 & 8.45 \\
Boost Converter & 2.34 & 3.55 & 3.94 \\
Brusselator & 0.12 & 49.2 & 49.2 \\
Crauste & 355.5 & 719.2 & $>1000$ \\
CSTR & $>1000$ & $>1000$ & $>1000$ \\
DAISY MaMil3 & 6.74 & 23.4 & 23.4 \\
DAISY MaMil4 & 1.28 & 2.71 & 2.71 \\
DC Motor & 6.30 & 6.29 & 6.29 \\
FitzHugh-Nagumo & 107.3 & 107.3 & 100.0 \\
Flexible Arm & 112.8 & 289.1 & 230.6 \\
Forced Lotka-Volterra & 0.14 & 0.27 & 0.63 \\
Harmonic Oscillator & 0.005 & 0.005 & 0.005 \\
HIV & 545.0 & $>1000$ & $>1000$ \\
Latent Subpopulation & 0.76 & 9.04 & 9.04 \\
Lotka-Volterra & 0.83 & 0.83 & 0.83 \\
Mass-Spring-Damper & 0.41 & 0.41 & 0.41 \\
Quadrotor & 2.20 & 2.16 & 2.16 \\
Receptor Binding & 6.53 & 12.1 & 12.1 \\
Repressilator & 3.95 & 3.95 & 3.95 \\
SEIR & 2.73 & 5.52 & 5.52 \\
SIRT Treatment & 25.9 & 134.5 & 96.7 \\
Slow-Fast & 3.71 & 2.92 & 2.92 \\
Van der Pol & 0.034 & 0.032 & 0.032 \\
\bottomrule
\end{tabular}
\end{table}

To make the comparison with AMIGO2 more explicit, \Cref{tab:head_to_head_amigo2} breaks down the paired SR-10 outcomes for the proposed (polished) method and AMIGO2 by ODE system.
\begin{table}[!t]
\centering
\caption{Paired SR-10 outcomes for the proposed method and AMIGO2, broken down by ODE system. Each system contributes 50 paired cells (10 sets of parameters and initial conditions at each of five noise levels). ``Proposed only'' counts cells where only the proposed method succeeds; ``AMIGO2 only'' counts the converse, and $\Delta$ is ``Proposed only'' minus ``AMIGO2 only''. Only systems with at least one discordant cell are shown.}
\label{tab:head_to_head_amigo2}
\scriptsize
\setlength{\tabcolsep}{2.6pt}
\begin{tabular}{lrrrrr}
\toprule
System & \makecell{Both\\succeed} & \makecell{Proposed\\only} & \makecell{AMIGO2\\only} & \makecell{Both\\fail} & $\Delta$ \\
\midrule
Crauste & 10 & 17 & 0 & 23 & +17 \\
Brusselator & 27 & 18 & 2 & 3 & +16 \\
Latent Subpopulation & 45 & 5 & 0 & 0 & +5 \\
SIRT Treatment & 39 & 4 & 0 & 7 & +4 \\
Receptor Binding & 43 & 3 & 0 & 4 & +3 \\
DAISY MaMil3 & 44 & 2 & 0 & 4 & +2 \\
CSTR & 23 & 1 & 0 & 26 & +1 \\
Flexible Arm & 37 & 1 & 0 & 12 & +1 \\
Slow-Fast & 48 & 1 & 0 & 1 & +1 \\
Biohydrogenation & 44 & 1 & 2 & 3 & -1 \\
DAISY MaMil4 & 49 & 0 & 1 & 0 & -1 \\
HIV & 18 & 0 & 4 & 28 & -4 \\
\bottomrule
\end{tabular}
\end{table}

This paired view shows that the aggregate advantage is concentrated rather than uniform. Summed over the discordant cells in the table, the net SR-10 advantage is $+44$ paired cells for the proposed method. The largest net gains occur on Crauste and Brusselator, where the proposed method succeeds on many cells for which AMIGO2 does not. At the same time, the comparison is not one-sided: Biohydrogenation and Brusselator each contain individual cells in both directions, and HIV favors AMIGO2 in the paired SR-10 comparison across all noise levels. Thus the aggregate lead should be read as a paired empirical advantage over the benchmark suite, not as a claim that the proposed method dominates AMIGO2 on every ODE instance.

\subsection{Impact of polishing}\label{ssec:polishing}

The proposed method can produce a solution either directly from the algebraic step or after an optional local refinement (``polishing'') of that solution. \Cref{fig:polishing} compares the two variants across noise levels. Polishing is the single largest lever for accuracy: aggregated over the benchmark it raises SR-10 from $70.6\%$ to $88.5\%$. The gain is largest at the highest noise level, where polishing adds $36$ percentage points ($65.2\%$ versus $29.6\%$ at $\eta=10^{-2}$), compared with about $10$ points in the noise-free case. It is especially decisive on the hardest systems, where the raw algebraic solution alone is frequently insufficient.

\begin{figure}[!t]
  \centering
  \includegraphics[width=\columnwidth]{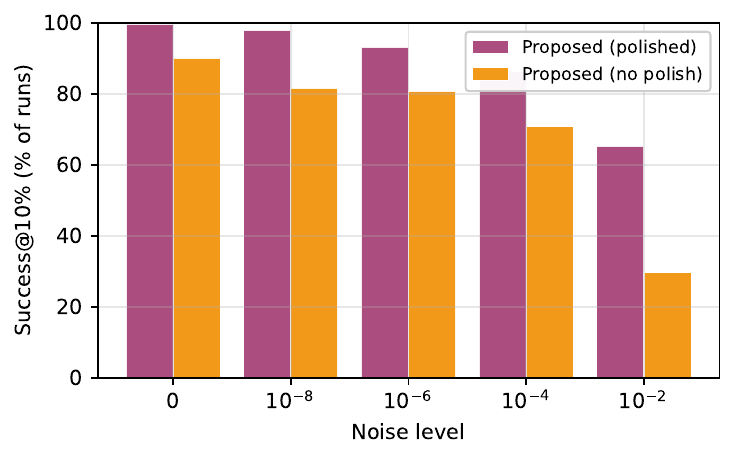}
  \caption{Impact of polishing: success rate (SR-10) by noise level, with and without the optional local refinement. The benefit is largest at the highest noise level.}
  \label{fig:polishing}
\end{figure}

\begin{remark}[Computational performance]
     The implementations of the compared methods differ in optimization level, compilation overhead, and parallelization strategy, so they are not a controlled speed comparison. With that caveat, the optimizers are faster in median time (AMIGO2 $276$\,s, SHADE $167$\,s) than the proposed method ($459$\,s without polishing, $661$\,s with).
\end{remark}

\subsection{Impact of derivative estimation: proposed method vs AAA-only}\label{ssec:aaa_gpr}
\color{black}

To assess the role of derivative estimation before local refinement, we compare the proposed method with polishing disabled against a variant restricted to AAA rational interpolation only, also without polishing. The proposed-method variant uses the benchmark derivative-estimator suite described above, whereas the AAA-only variant uses the derivative estimator from the original algebraic method~\cite{bassik2023robustparameterestimationrational}. Thus the comparison isolates the cost of relying on AAA interpolation alone in noisy data, rather than the effect of local polishing. We evaluate both across the full noise grid (\Cref{fig:aaa_gpr}).

At zero and very small noise levels the two variants are comparable (SR-10 of $91.2\%$ for AAA-only versus $90.0\%$ for the proposed method at $\eta=0$). By $\eta=10^{-4}$, however, the AAA-only variant has degraded sharply: SR-10 falls to $47.2\%$ at $\eta=10^{-4}$ and $10.8\%$ at $\eta=10^{-2}$, against $70.8\%$ and $29.6\%$ for the proposed no-polish method. This shows that restricting derivative estimation to AAA only is far less robust in the presence of measurement noise, and the conclusion holds for the raw algebraic solutions, before any local refinement.

\begin{figure}[!t]
  \centering
  \includegraphics[width=\columnwidth]{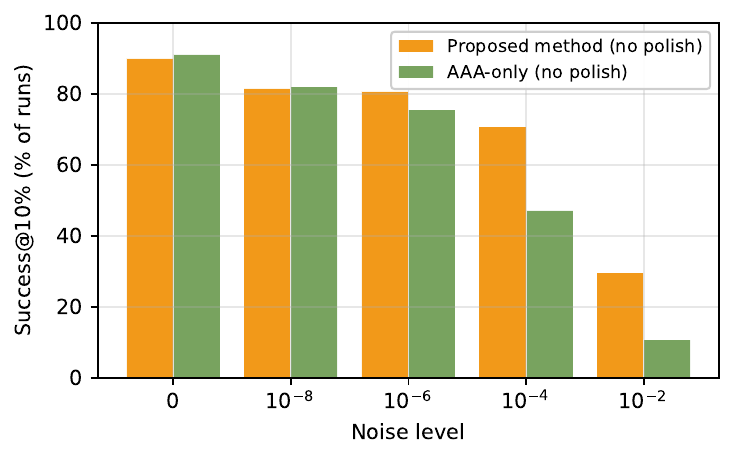}
  \caption{Derivative estimation: proposed method versus AAA-only (both without polishing). The two variants are comparable at zero and very low noise, but AAA-only degrades more sharply at the two highest noise levels.}
  \label{fig:aaa_gpr}
\end{figure}

\section{Discussion}
\label{sec:discussion}
Our results support the viability of algebraic parameter estimation for dense, noisy time-series data when paired with robust derivative estimation. The primary takeaway is that addressing the main practical weakness of the algebraic method (its sensitivity to noisy derivative estimates) makes the approach competitive with established optimization techniques while preserving important advantages of the algebraic formulation.

\subsection{Smoothing and Interpolation under Noise}
The comparison in~\Cref{ssec:aaa_gpr} shows that the full no-polish derivative-estimator suite and the AAA-only variant perform comparably at zero and very low noise, but AAA-only deteriorates much more sharply at the two highest noise levels. This is consistent with the mechanism described in~\Cref{sec:theory}: interpolation can transmit sample-scale noise into the differentiated approximation, whereas regression-based estimators can smooth the observations before differentiation. Because the comparison is between the full suite and AAA alone, it supports the inclusion of smoothing estimators but does not isolate the contribution of GPR.

\subsection{Algebraic candidate generation and local refinement}

The polished method is best viewed as a hybrid procedure: algebraic solving supplies structured candidate parameter sets, and local least-squares refinement turns the best candidates into trajectory-accurate estimates. The same trajectory loss is used to rank and refine the candidates. Because the SHADE baseline applies the same local refinement to its own candidates, the headline comparison reflects the quality of algebraic candidate generation rather than the refinement step alone.

The algebraic stage does not require user-supplied initial parameter guesses or multistart strategies; instead, it supplies data-driven starting points for refinement. Users may still impose physical constraints such as positivity when that information is available.

%Furthermore, because the algebraic stage solves a finite polynomial system, it can enumerate candidate parameter sets for the chosen algebraic subsystem rather than following a single local descent trajectory. With exact derivative data, the algebraic stage can find the complete finite set of parameter solutions predicted by the identifiability analysis~\cite{bassik2023robustparameterestimationrational}. Thus, for a locally but not globally identifiable model, it can return every admissible parameter set rather than whichever one is reached from a single optimizer start.

%This should not be read as a global solution of the noisy trajectory least-squares problem: derivative error, subsystem selection, candidate filtering, and local refinement all affect the final estimate. Instead, the practical role of the algebraic stage is to provide a deterministic and hopefully small set of candidate solutions that can then be ranked and, when desired, refined by the same trajectory loss used for the optimizer baselines.

Performance varies substantially across systems. The harder systems tend to require higher-order output derivatives, larger polynomial systems, or both, as summarized in the appendix. This is consistent with the two mechanisms in~\Cref{sec:theory}: derivative errors are harder to control at higher orders, and the selected algebraic system can amplify those errors.

\subsection{Limitations and Future Work}
The method requires data that are sufficiently dense to support reliable smoothing and differentiation. Our benchmark uses dense synthetic trajectories; sparser measurements could be addressed by incorporating ODE-informed priors or coupling smoothing and parameter estimation more tightly.

The cost of the polynomial solve grows with the number of states, parameters, and equations in the selected template. Extending the method to larger models will require improved equation selection, greater use of model structure, or more specialized polynomial solvers.

The present implementation uses the GPR posterior mean but does not propagate posterior uncertainty through the algebraic solve. The posterior variance could support uncertainty-aware candidate filtering, parameter confidence intervals, or adaptive shooting-point selection. The evaluation is also limited to synthetic data from globally identifiable models with known structure. Real data, model mismatch, and locally identifiable systems are natural directions for future work.

\section{Conclusion}
\label{sec:conclusion}

We have presented a GPR-enhanced differential-algebraic method for estimating parameters and unknown initial conditions in rational ODE systems from noisy time-series data. Across 1{,}250 datasets from 25 systems, the polished method achieved the highest aggregate success rates among the compared methods, with SR-1 of 79.6\%, SR-10 of 88.5\%, and SR-50 of 91.6\%. The controlled comparisons show that the full derivative estimator suite is more robust to noise than AAA alone and that local refinement provides the largest gain on difficult noisy instances. Future work will address sparse and real data, locally identifiable models, parameter uncertainty, and larger polynomial systems.

\paragraph{The use of AI.} We used Claude and Codex to assist in running computational benchmarks, analyzing data, preparing figures, editing, revising the manuscript, and literature search.

\section*{Data and Code Availability}
The code and data used for the benchmark are available at \url{https://github.com/orebas/odepe-noisy-benchmark-artifact/releases/tag/v1.0.1}. The repository includes the analysis scripts, source snapshots, final combined benchmark CSV, configuration files, and a full per-cell benchmark archive containing generated data, run scripts, outputs, logs, metadata, and checksums. Code is distributed under GPL-3.0; generated benchmark data are distributed under CC-BY-4.0.

\appendix
\label{sec:benchmark_models_appendix}
\color{black}
This appendix summarizes the ODE systems used in the benchmark in normalized symbolic form. It records their state and output structure and the selected algebraic template sizes; the exact prescribed inputs, fixed scaling constants, and numerical configurations are provided in the accompanying repository. Unless stated otherwise, the named model coefficients are estimated. The benchmark is adapted from~\cite{bassik2023robustparameterestimationrational} and extended here with systems with prescribed inputs. It comprises 25 systems, including both linear/affine and nonlinear dynamics and spanning several scientific and engineering domains. Each system is globally identifiable for the reported outputs after excluding structurally unidentifiable parameters and initial conditions.

\subsection*{System Specifications and Template Sizes}

Table~\ref{tab:benchmark_template_sizes} summarizes each benchmark system and the algebraic template selected by the proposed method. The reported derivative order and polynomial-system size are selected-template quantities, not intrinsic invariants of the ODE models; the measured state variables and output combinations are listed in the model definitions below.

% Generated by scripts/generate_benchmark_template_table.py.
\begin{table}[!ht]
\centering
\caption{Benchmark systems and selected algebraic template sizes. Outputs is the number of measured quantities, $h$ is the maximum observed-output derivative order, and $|F|$ is the number of equations in the selected square polynomial system, equal to its number of unknowns.}
\label{tab:benchmark_template_sizes}
\scriptsize
\setlength{\tabcolsep}{3.0pt}
\begin{tabular}{@{}lrrrr@{}}
\toprule
System & States & Params & Outputs & $h/|F|$ \\
\midrule
Aircraft Pitch & 3 & 4 & 1 & $5/17$ \\
Bicycle Model & 2 & 3 & 2 & $2/10$ \\
Biohydrogenation & 4 & 6 & 3 & $2/18$ \\
Boost Converter & 2 & 3 & 2 & $2/10$ \\
Brusselator & 2 & 2 & 2 & $1/8$ \\
Crauste & 5 & 12 & 4 & $4/39$ \\
CSTR & 3 & 4 & 1 & $6/26$ \\
DAISY MaMil3 & 3 & 5 & 2 & $4/21$ \\
DAISY MaMil4 & 4 & 7 & 4 & $2/22$ \\
DC Motor & 2 & 2 & 1 & $3/11$ \\
FitzHugh-Nagumo & 2 & 3 & 1 & $4/14$ \\
Flexible Arm & 4 & 5 & 2 & $4/25$ \\
Forced Lotka-Volterra & 2 & 4 & 2 & $2/12$ \\
Harmonic Oscillator & 2 & 2 & 2 & $1/8$ \\
HIV & 5 & 10 & 4 & $3/33$ \\
Latent Subpopulation & 5 & 6 & 5 & $2/23$ \\
Lotka-Volterra & 2 & 3 & 1 & $4/14$ \\
Mass-Spring-Damper & 2 & 3 & 1 & $4/14$ \\
Quadrotor & 2 & 2 & 1 & $3/11$ \\
Receptor Binding & 3 & 6 & 3 & $3/20$ \\
Repressilator & 6 & 3 & 3 & $2/24$ \\
SEIR & 4 & 3 & 3 & $2/17$ \\
SIRT Treatment & 4 & 5 & 3 & $4/26$ \\
Slow-Fast & 6 & 2 & 5 & $2/21$ \\
Van der Pol & 2 & 2 & 2 & $1/8$ \\
\bottomrule
\end{tabular}
\end{table}

\paragraph{Harmonic Oscillator Model}
A model for harmonic oscillators without damping.
\begin{align}
   \begin{cases}
       \dot{x_1} = -a\,x_2 \\
       \dot{x_2} = \frac{1}{b}\,x_1
     \end{cases}
  \label{eq:simple}
\end{align}
\textit{Outputs}: $y_1 = x_1, y_2 = x_2$.

\paragraph{Van der Pol Oscillator Model}
A classical nonlinear oscillator from electrical circuit theory~\cite{VanderPol1927}.
\begin{align}
   \begin{cases}
       \dot{x_1} = a x_2 \\
       \dot{x_2} = -x_1 - b(x_1^2 -1)x_2
     \end{cases}
  \label{eq:vdp}
\end{align}
\textit{Outputs}: $y_1 = x_1, y_2 = x_2$.

\paragraph{FitzHugh-Nagumo Model}
A two-dimensional simplification of spike generation in squid giant axons~\cite{FitzHugh1961,Nagumo1962}.
\begin{align}
   \begin{cases}
       \dot{V} = g\,\left(V - \frac{V^3}{3} + R\right) \\
       \dot{R} = \frac{1}{g}\,\left(V - a + b\,R\right)
     \end{cases}
  \label{eq:fhn}
\end{align}
\textit{Output}: $y_1 = V$.

\paragraph{HIV Dynamics Model}
Models HIV infection dynamics during interaction with the immune system~\cite{Perelson1993}.
\begin{align}
   \begin{cases}
       \dot{x} = \lambda - d\,x - \beta\,x\,v   \\
       \dot{y} = \beta\,x\,v - a\,y             \\
       \dot{v} = k\,y - u\,v                    \\
       \dot{w} = c\,z\,y\,w - c\,q\,y\,w - b\,w \\
       \dot{z} = c\,q\,y\,w - h\,z
     \end{cases}
  \label{eq:hiv}
\end{align}
\textit{Outputs}: $y_1 = w, y_2 = z, y_3 = x, y_4 = y + v$.

\paragraph{Mammillary 3-Compartment Model}
A 3-compartment pharmacokinetic model~\cite{Jacobs1990} from the DAISY identifiability software examples~\cite{Bellu2007}.
\begin{align}{}
   \begin{cases}
       \dot{x_1} = -(a_{21} + a_{31} + a_{01})\,x_1 + a_{12}\,x_2 + a_{13}\,x_3 \\
       \dot{x_2} = a_{21}\,x_1 - a_{12}\,x_2                                    \\
       \dot{x_3} = a_{31}\,x_1 - a_{13}\,x_3
     \end{cases}
  \label{eq:daisy_mamil3}
\end{align}
\textit{Outputs}: $y_1 = x_1, y_2 = x_2$.

\paragraph{Lotka-Volterra Model}
Models predator-prey interactions in an ecosystem~\cite{Lotka1925,Volterra1928}.
\begin{align}
   \begin{cases}
       \dot{r} = k_1\,r - k_2\,r\,w, \\
       \dot{w} = k_2\,r\,w - k_3\,w
     \end{cases}
  \label{eq:lv}
\end{align}
\textit{Output}: $y_1=r$.

\paragraph{Crauste Model}
A CD8 T-cell model~\cite{Terry2012} with the memory-cell decay coefficient fixed at zero.
\begin{align}
   \begin{cases}
       \dot{N} = -\mu_N\,N - \delta_{NE}\,N\,P                                     \\
       \dot{E} = \delta_{NE}\,N\,P - \mu_{EE}\,E^2 - \delta_{EL}\,E + \rho_E\,E\,P \\
       \dot{S} = \delta_{EL}\,E - \delta_{LM}\,S - \mu_{LL}\,S^2 - \mu_{LE}\,E\,S  \\
       \dot{M} = \delta_{LM}\,S                                                    \\
       \dot{P} = \rho_P\,P^2 - \mu_P\,P - \mu_{PE}\,E\,P -\mu_{PL}\,S\,P
     \end{cases}
  \label{eq:crauste}
\end{align}
\textit{Outputs}: $y_1 = N, y_2 = E, y_3 = S + M, y_4 = P$.

\paragraph{Biohydrogenation Model}
Models kinetic processes in the biohydrogenation of fatty acids~\cite{Harvatine2004}.
\begin{align}
   \begin{cases}
       \dot{x_4} = - \frac{k_5 \, x_4 }{k_6 + x_4} \\
       \dot{x_5} = \frac{k_5 \, x_4 }{k_6 + x_4} - \frac{k_7 \, x_5}{k_8 + x_5 + x_6} \\
       \dot{x_6} = \frac{k_7 \, x_5 }{k_8 + x_5 + x_6} - \frac{k_9 \, x_6 \, (k_{10} - x_6) }{k_{10}} \\
       \dot{x_7} = \frac{k_9 \, x_6 \, (k_{10} - x_6) }{k_{10}}
     \end{cases}
  \label{eq:bioh}
\end{align}
\textit{Outputs}: $y_1 = x_4, y_2 = x_5, y_3 = x_6$.

\textit{Note}: The additional observable $y_3 = x_6$ is included so that the system is globally, rather than only locally, identifiable. The state variable $x_7$ is structurally unidentifiable, as it does not appear in the output equations nor influence any observed variables.

\paragraph{Mammillary 4-Compartment Model}
A 4-compartment pharmacokinetic model~\cite{Jacobs1990} from the DAISY identifiability software examples~\cite{Bellu2007}.
\begin{align}
   \begin{cases}
       \dot{x_1}  =  -k_{01} \, x_1 + k_{12} \, x_2 + k_{13}\, x_3 +        \\
       \hspace{3.5em}k_{14}x_4 - k_{21}\, x_1 - k_{31}\, x_1 - k_{41}\, x_1 \\
       \dot{x_2}  =  -k_{12} \, x_2 + k_{21} \, x_1                         \\
       \dot{x_3}  =  -k_{13} \, x_3 + k_{31} \, x_1                         \\
       \dot{x_4}  =  -k_{14} \, x_4 + k_{41} \, x_1                         \\
     \end{cases}
  \label{eq:daisy_mamil4}
\end{align}
\textit{Outputs}: $y_1 = x_1, y_2 = x_2, y_3 = x_3+x_4, y_4 = x_3$. The channel-specific observable $y_4$ is included to make the system globally identifiable.

\paragraph{SEIR Model}
Models an epidemic with stages of disease progression~\cite{Kermack1927}.
\begin{align}
   \begin{cases}
    \dot{S} = -b \, S \, I / N\\
    \dot{E} = b \, S \, I / N - \nu \, E\\
    \dot{I} = \nu \, E - a \, I\\
    \dot{N} = 0
     \end{cases}
  \label{eq:seir}
\end{align}
\textit{Outputs}: $y_1 = I, y_2 = N, y_3 = E$. The exposed-compartment observable $y_3$ is included to make the system globally identifiable.

Several of the remaining benchmark systems are driven by a known input $u(t)$.

\paragraph{Brusselator Model}
A classical model of an autocatalytic chemical reaction with oscillatory dynamics~\cite{Prigogine1968}.
\begin{align}
   \begin{cases}
       \dot{X} = 1 - (b+1)\,X + a\,X^2\,Y \\
       \dot{Y} = b\,X - a\,X^2\,Y
     \end{cases}
  \label{eq:brusselator}
\end{align}
\textit{Outputs}: $y_1 = X, y_2 = Y$.

\paragraph{Repressilator Model}
A synthetic genetic oscillator of three mutually repressing genes~\cite{Elowitz2000}, with mRNA concentrations $m_i$ and protein concentrations $p_i$.
\begin{align}
   \dot m_i &= -m_i + \frac{8\beta}{1+c_i n p_{i-1}},
   \quad i=1,2,3,\notag\\
   \dot p_i &= -2\alpha\left(p_i-\frac{m_i}{s_i}\right),
   \quad i=1,2,3,
  \label{eq:repressilator}
\end{align}
where $p_0=p_3$, $(c_1,c_2,c_3)=(24,16,8)$, and $(s_1,s_2,s_3)=(8,4,12)$. The estimated parameters are $\alpha$, $\beta$, and $n$.
\textit{Outputs}: $y_1 = 4p_1, y_2 = 2p_2, y_3 = 6p_3$.

\paragraph{Forced Lotka-Volterra Model}
The classical predator-prey model driven by a periodic input $u(t)$.
\begin{align}
   \begin{cases}
       \dot{x} = \alpha\,x - \beta\,x\,y + u(t) \\
       \dot{y} = \delta\,x\,y - \gamma\,y
     \end{cases}
  \label{eq:forced_lv}
\end{align}
\textit{Outputs}: $y_1 = x, y_2 = y$.

\paragraph{Latent Subpopulation Model}
A multi-strain epidemic model with three infected subpopulations $I_j$.
\begin{align}
   \begin{cases}
       \dot{S} = -\textstyle\sum_{j=1}^{3} b_j\,S\,I_j \\
       \dot{I_j} = b_j\,S\,I_j - a_j\,I_j, \quad j = 1,2,3 \\
       \dot{R} = \textstyle\sum_{j=1}^{3} a_j\,I_j
     \end{cases}
  \label{eq:latent_subpop}
\end{align}
\textit{Outputs}: $y_1 = S, y_2 = I_1, y_3 = I_2, y_4 = I_3, y_5 = R$.

\paragraph{Receptor Subtype Binding Model}
Ligand binding to two receptor subtypes with distinct rates, where $L$ is the free ligand and $C_a, C_b$ are the bound complexes.
\begin{align}
   \begin{cases}
       \dot{L} = -k_{\mathrm{on}}^{1} L\,(R_1^{\mathrm{tot}} - C_a) + k_{\mathrm{off}}^{1} C_a \\
                 \qquad -k_{\mathrm{on}}^{2} L\,(R_2^{\mathrm{tot}} - C_b) + k_{\mathrm{off}}^{2} C_b \\
       \dot{C_a} = k_{\mathrm{on}}^{1} L\,(R_1^{\mathrm{tot}} - C_a) - k_{\mathrm{off}}^{1} C_a \\
       \dot{C_b} = k_{\mathrm{on}}^{2} L\,(R_2^{\mathrm{tot}} - C_b) - k_{\mathrm{off}}^{2} C_b
     \end{cases}
  \label{eq:receptor}
\end{align}
\textit{Outputs}: $y_1 = L, y_2 = C_a, y_3 = C_b$.

The following are engineering and control systems.

\paragraph{Mass-Spring-Damper System}
A mechanical oscillator with mass $m$, damping $c$, and stiffness $k$, driven by a force $u(t)$, with position $x$ and velocity $v$.
\begin{align}
   \begin{cases}
       \dot{x} = v \\
       m\,\dot{v} = u(t) - c\,v - k\,x
     \end{cases}
  \label{eq:msd}
\end{align}
\textit{Output}: $y_1 = x$.

\paragraph{DC Motor Model}
An armature-controlled DC motor with angular velocity $\omega$ and armature current $i$, driven by an input voltage $u(t)$; $K_t$ and $J_m$ are the estimated torque constant and inertia.
\begin{align}
   \begin{cases}
       J_m\,\dot{\omega} = K_t\,i - b\,\omega \\
       L\,\dot{i} = u(t) - R\,i - k_e\,\omega
     \end{cases}
  \label{eq:dc_motor}
\end{align}
\textit{Output}: $y_1 = \omega$.

\paragraph{Boost Converter Model}
An averaged model of a DC-DC boost converter with inductor current $i_L$ and capacitor voltage $v_C$, known complement $d(t)$ of the switching duty cycle, input voltage $V_{\mathrm{in}}$, inductance $L$, capacitance $C$, and load resistance $R$. The parameters $L$, $C$, and $R$ are estimated, while $V_{\mathrm{in}}$ and $d(t)$ are prescribed.
\begin{align}
   \begin{cases}
       L\,\dot{i_L} = V_{\mathrm{in}} - d(t)\,v_C \\
       C\,\dot{v_C} = d(t)\,i_L - \dfrac{v_C}{R}
     \end{cases}
  \label{eq:boost}
\end{align}
\textit{Outputs}: $y_1 = v_C, y_2 = i_L$.

\paragraph{Quadrotor (Vertical Dynamics)}
A vertical-axis simplification of quadrotor motion with altitude $z$ and vertical velocity $w$, driven by a thrust input $u(t)$, with mass $m$ and drag $d$.
\begin{align}
   \begin{cases}
       \dot{z} = w \\
       m\,\dot{w} = u(t) - d\,w
     \end{cases}
  \label{eq:quadrotor}
\end{align}
\textit{Output}: $y_1 = z$.

\paragraph{Flexible Arm Model}
A two-inertia model of a flexible robot joint with motor angle $\theta_m$, link angle $\theta_t$, their angular velocities $\omega_m, \omega_t$, and coupling stiffness $k$.
\begin{align}
   \begin{cases}
       \dot{\theta_m} = \omega_m \\
       J_m\,\dot{\omega_m} = u(t) - b_m\,\omega_m - k\,(\theta_m - \theta_t) \\
       \dot{\theta_t} = \omega_t \\
       J_t\,\dot{\omega_t} = -b_t\,\omega_t + k\,(\theta_m - \theta_t)
     \end{cases}
  \label{eq:flexible_arm}
\end{align}
\textit{Outputs}: $y_1 = \theta_m, y_2 = \theta_t$.

\paragraph{Aircraft Pitch Model}
A short-period aircraft model with pitch angle $\theta$, pitch rate $q$, angle of attack $\alpha$, and known elevator input $u(t)$.
\begin{align}
   \begin{cases}
       \dot{\theta} = q \\
       \dot{q} = -M_\alpha\,\alpha - M_q\,q - M_{\delta_e}\,u(t) \\
       \dot{\alpha} = q - Z_\alpha\,\alpha
     \end{cases}
  \label{eq:aircraft_pitch}
\end{align}
\textit{Output}: $y_1 = q$. The initial pitch angle $\theta(0)$ is structurally unidentifiable and excluded from scoring.

\paragraph{SIRT Treatment Model}
An SIR-type epidemic model augmented with a treated compartment $T$ that modifies transmission, with total population $N$.
\begin{align}
   \begin{cases}
       \dot{S} = -b\,\dfrac{S\,I}{N} - d\,b\,\dfrac{S\,T}{N} \\
       \dot{I} = b\,\dfrac{S\,I}{N} + d\,b\,\dfrac{S\,T}{N} - (a + g)\,I \\
       \dot{T} = g\,I - \nu\,T \\
       \dot{N} = 0
     \end{cases}
  \label{eq:sirt}
\end{align}
\textit{Outputs}: $y_1 = T, y_2 = N, y_3 = I$.

\paragraph{CSTR Model}
The input-driven continuous stirred-tank reactor model used in the benchmark, including the measured output $y_1=700\,T$, is described in detail in~\Cref{sec:examples}.

\paragraph{Slow-Fast Model}
A slow-fast enzyme-kinetics cascade in which a substrate is converted through intermediates ($x_A \to x_B \to x_C$) on two separated time scales, coupled to enzyme concentrations $e_A, e_B, e_C$ that are constant over the observation window.
\begin{align}
   \begin{cases}
       \dot{x_A} = -k_1\,x_A \\
       \dot{x_B} = k_1\,x_A - k_2\,x_B \\
       \dot{x_C} = k_2\,x_B \\
       \dot{e_A} = \dot{e_B} = \dot{e_C} = 0
     \end{cases}
  \label{eq:slow_fast}
\end{align}
\textit{Outputs}: $y_1 = x_C$, $y_2 = x_A e_A + x_B e_B + x_C e_C$, $y_3 = e_A$, $y_4 = e_C$, $y_5 = e_B$. The observable $y_5 = e_B$ is included to make the system globally identifiable.

\paragraph{Bicycle Model}
A single-track (bicycle) model of vehicle lateral dynamics with lateral velocity $v_y$ and yaw rate $r$, driven by a steering input $u(t)$. Here $C_f, C_r$ are the front and rear cornering stiffnesses and $m$ is the vehicle mass; the forward speed $V_0$, axle distances $a, b$, and yaw inertia $I_z$ are fixed constants.
\begin{align}
   \begin{cases}
       \dot{v_y} = \dfrac{C_f\,\alpha_f + C_r\,\alpha_r}{m} - V_0\,r \\
       \dot{r} = \dfrac{a\,C_f\,\alpha_f - b\,C_r\,\alpha_r}{I_z}
     \end{cases}
  \label{eq:bicycle}
\end{align}
where the tire slip angles are $\alpha_f = u(t) - (v_y + a\,r)/V_0$ and $\alpha_r = -(v_y - b\,r)/V_0$.
\textit{Outputs}: $y_1 = r$, $y_2 = v_y$.

\bibliographystyle{unsrt}
\bibliography{references}

\end{document}